\documentclass[letterpaper,journal]{IEEEtran}

\usepackage{amsmath,amssymb,amsfonts,bm}
\usepackage{amsthm}
\usepackage{cite}
\usepackage{array}
\usepackage[caption=false,font=normalsize,labelfont=sf,textfont=sf]{subfig}
\usepackage{textcomp}
\usepackage{stfloats}
\usepackage{url}
\usepackage{graphicx}
\usepackage{xcolor}
\usepackage{placeins}
\usepackage{float}
\usepackage[colorlinks=true,linkcolor=blue,citecolor=blue,urlcolor=blue,bookmarksopen,bookmarksnumbered]{hyperref}
\newtheoremstyle{ieee-result}
  {3pt}{3pt}{\itshape}{}{\bfseries}{.}{0.5em}
  {\thmname{#1}\thmnumber{ #2}\thmnote{\normalfont\space(#3)}}
\theoremstyle{ieee-result}
\newtheorem{assumption}{Assumption}
\newtheorem{definition}{Definition}
\newtheorem{lemma}{Lemma}
\newtheorem{proposition}{Proposition}
\newtheorem{theorem}{Theorem}
\newtheorem{corollary}{Corollary}
\newtheorem{remark}{Remark}

\newcommand{\CN}{\mathbb{C}}

\newcommand{\EISL}{\operatorname{EISL}}
\newcommand{\ESL}{\operatorname{ESL}}
\newcommand{\ISL}{\operatorname{ISL}}

\newcommand{\tr}{\operatorname{Tr}}
\begin{document}

\title{Statistical Analysis of 3D Ambiguity Functions for Random MIMO-OFDM Signals}

\author{Pinjun Zeng,~\IEEEmembership{Student Member,~IEEE},
        Yuyang Lu,~\IEEEmembership{Student Member,~IEEE},
        Jialan Tang,~\IEEEmembership{Student Member,~IEEE},
        Xiaoyang Chen,~\IEEEmembership{Student Member,~IEEE}, 
        Yifeng Xiong,~\IEEEmembership{Member,~IEEE},
        Fan Liu,~\IEEEmembership{Senior Member,~IEEE},
        and Shi Jin,~\IEEEmembership{Fellow,~IEEE}
        \thanks{Fan Liu is the corresponding author.}}

\maketitle

\begin{abstract}
Random communication symbols allow integrated sensing and communications (ISAC) systems to exploit payload signals for sensing, but the resulting matched-filter response varies with the transmitted data. Existing statistical analyses of sensing with random orthogonal frequency division multiplexing (OFDM) waveforms have largely focused on delay and Doppler, while the spatial dimension introduced by multi-input multi-output (MIMO) arrays and precoding has received much less attention. This paper develops a three-dimensional ambiguity function (3D-AF) for random MIMO-OFDM waveforms over delay, Doppler, and angle domains. For point-target sensing, we express the 3D-AF as a quadratic form of the random communication symbols, with a deterministic matrix that captures the delay and Doppler shifts, the transmit and receive array responses, and the precoders on different subcarriers. Based on this representation, we derive a closed-form expression for the expected squared 3D-AF and show explicitly how the constellation fourth-order moment (or kurtosis) affects the ambiguity sidelobes. The same characterization yields expected integrated sidelobe level (EISL) and expected sidelobe level (ESL) metrics over mixed delay-Doppler-angle search domains. The framework is further extended to coherent multi-symbol processing with slow-time Doppler, leading to the corresponding fast-slow time (FST) approximation. These results provide a statistical characterization of the sensing ambiguity induced by random MIMO-OFDM communication signals in delay, Doppler, and angle.
\end{abstract}

\begin{IEEEkeywords}
ISAC, MIMO-OFDM, ambiguity function.
\end{IEEEkeywords}

\section{Introduction}

\IEEEPARstart{I}{ntegrated} sensing and communications (ISAC) is regarded as an enabling technology for future wireless networks, in which communication and sensing functions share spectrum, hardware, and signaling resources~\cite{sturm2011fusion}.
By allowing communication waveforms to probe the surrounding environment, ISAC can provide sensing capability without dedicated radar signals or separate sensing infrastructure.
This resource-sharing principle is particularly relevant to emerging sixth-generation (6G) applications that require both high-rate connectivity and environmental awareness~\cite{liu2022isac}.

OFDM and MIMO transmission are central components of modern wireless systems and are therefore natural candidates for ISAC\@.
OFDM offers a flexible time-frequency resource grid, whereas MIMO arrays provide spatial degrees of freedom for beamforming, multiplexing, and angle estimation~\cite{zhang2021overview}.
For millimeter-wave (mmWave) joint radar-communications systems, wide bandwidths and large antenna arrays also motivate joint waveform design and spatial processing~\cite{mishra2019mmwave}.
Multicarrier ISAC studies have investigated time-frequency resource design~\cite{crossdomain2024} and waveform optimization under resource-occupancy constraints~\cite{resourceoccupancy2024}.
Modern communication signaling schemes contain a variety of deterministic reference signals, such as demodulation reference signals (DMRS) and channel state information reference signals (CSI-RS), which can be directly exploited for sensing.
However, these signals occupy only a limited fraction of the radio resources.
To further improve sensing performance, payload symbols occupying a substantial part of the resource grid can also be exploited, although they vary with the transmitted data~\cite{liu2022isac}.

Using random payload symbols for sensing makes the matched-filter response dependent on the transmitted data realization.
Consequently, the ambiguity sidelobes also fluctuate across different payload realizations and are naturally characterized from a statistical perspective.
The ambiguity function (AF) describes the matched-filter response under delay and Doppler mismatch and has been widely used to evaluate radar sensing performance~\cite{woodward1953,levanon2004}.
Statistical studies of random communication waveforms have characterized delay-Doppler AFs~\cite{zhang2025discreteaf,inputdist2024} and the zero-Doppler autocorrelation sidelobes of pulse-shaped signals~\cite{liao2024pulseshaping}.
Sparse vector coding provides a complementary waveform-design approach to suppress delay-Doppler sidelobes while embedding communication data~\cite{svc2022}.
For MIMO-OFDM signals, however, the matched-filter response also varies with angle and with the precoder used on each subcarrier.
This motivates a statistical AF characterization over the joint delay-Doppler-angle domain.

MIMO-OFDM sensing has been studied from several perspectives.
Existing ISAC works have considered waveform and precoder design under communication and sensing requirements~\cite{temiz2021precoders,constructive2024,slp2025}.
Related designs maximize radar detection probability under peak-to-average power ratio (PAPR) and communication constraints~\cite{papr2025,detectionprob2024}.
MIMO radar research has considered transmit beampattern and target-return cross-correlation design~\cite{stoica2007mimo}, as well as AF formulations that account for array geometry and transmitted waveforms, including transmit beamspace processing~\cite{sanantonio2007mimoaf,li2015beamspace}.
These studies have established that both time-frequency signaling and spatial processing may strongly affect sensing performance.
For MIMO-OFDM communication signals carrying random payload data, the ambiguity surface varies with the transmitted symbols, even when the precoders and array configuration remain fixed.
A statistical characterization is therefore needed to describe this variability.

Recent work on sensing with random communication waveforms has begun to address this problem explicitly. The expected ranging sidelobes have been expressed as a closed-form function of the modulation basis, pulse shaping, and the fourth-order moment of the transmitted symbols \cite{liu2025iceberg,liu2025cpofdm}. This dependence has motivated constellation shaping for OFDM sensing~\cite{du2024reshaping}. Separately, precoding for random MIMO signals has been studied using an ergodic sensing estimation-error criterion~\cite{lu2024randomprecoding}. More relevant to this paper, Zhang et al.~\cite{zhang2025discreteaf} developed a statistical analysis of the discrete periodic AF for random communication waveforms and derived closed-form expected sidelobe level (ESL) and expected integrated sidelobe level (EISL) expressions on the delay-Doppler plane, together with a fast-slow time (FST) approximation. The analysis in \cite{zhang2025discreteaf}, however, does not account for the angular dimension or MIMO precoding. Extending this statistical framework to MIMO-OFDM therefore requires a joint treatment of delay, Doppler, angle, and the spatial processing applied across subcarriers.

In this paper, we formulate the delay-Doppler-angle AF for random MIMO-OFDM communication waveforms with spatial processing at both the transmitter and receiver. We show that the 3D-AF admits a quadratic form in the random payload symbols, where the corresponding matrix is determined by the delay-Doppler shift, array responses, and MIMO precoding. Based on this quadratic form, we derive a closed-form expression for the expected squared 3D-AF and the associated sidelobe metrics. Unlike delay and Doppler, whose effects depend on the offset from the true parameter under the adopted cyclic model, the angular response also depends on the target direction itself through the array manifold. This distinction is taken into account throughout the 3D-AF formulation and its statistical analysis. In light of the above framework, the main contributions of this paper are summarized as follows.

\begin{itemize}
  \item We formulate a 3D delay-Doppler-angle AF for random MIMO-OFDM communication waveforms and relate it directly to the matched-filter output for point-target sensing. We further show that the 3D-AF admits a quadratic form in the random payload symbols, where the corresponding matrix accounts for the delay-Doppler shift, array responses, and MIMO precoding.
  \item We derive a closed-form expression for the expected squared 3D-AF under independent identically distributed (i.i.d.) communication symbols. The expression shows explicitly how the fourth-order moment of the symbol constellation affects the ambiguity sidelobes, and further yields the EISL and ESL over a three-dimensional delay-Doppler-angle search region.
  \item We extend the analysis to coherent processing over multiple OFDM symbols using the FST model. Under the explicit small intra-block Doppler-phase condition, we retain the slow-time Doppler bin $q\in\mathbb{Z}_M$ and show that the resulting AF matrix has a Kronecker structure when the same precoders are adopted across OFDM symbols.
\end{itemize}

The remainder of this paper is organized as follows. Section II presents the ULA, random-symbol, MIMO-OFDM transmit, and point-target echo models, followed by the single-symbol and FST multi-symbol 3D-AF definitions and sidelobe metrics. Section III develops the quadratic-form statistical characterization for both cases. Section IV compares the theoretical expected squared FST-AF on the discrete delay-Doppler cells and the continuous angular interval, with angular samples used for numerical integration, and Section V concludes the paper. Appendices provide the detailed derivations.

\emph{Notations}: Matrices are denoted by bold uppercase letters, e.g., $\mathbf{A}$; vectors are denoted by bold lowercase letters, e.g., $\mathbf{s}$; and scalars are denoted by normal font, e.g., $N$. The $i$th entry of $\mathbf{s}$ is denoted by $s_i$, and the $(m,n)$th entry of $\mathbf{A}$ is denoted by $[\mathbf{A}]_{m,n}$. The transpose, Hermitian transpose, and entry-wise complex conjugate are denoted by $(\cdot)^T$, $(\cdot)^H$, and $(\cdot)^*$, respectively. The trace, expectation, and covariance operators are denoted by $\operatorname{Tr}(\cdot)$, $\mathbb{E}[\cdot]$, and $\operatorname{Cov}(\cdot,\cdot)$, respectively. The Kronecker product is denoted by $\otimes$. The notation $\operatorname{diag}(\mathbf{a})$ denotes a diagonal matrix with diagonal entries given by $\mathbf{a}$, and $\operatorname{blkdiag}(\mathbf{A}_1,\ldots,\mathbf{A}_M)$ denotes a block-diagonal matrix. The identity matrix of size $N$ is denoted by $\mathbf{I}_N$. The Frobenius inner product is defined as $\langle \mathbf{A},\mathbf{B}\rangle_F := \operatorname{Tr}(\mathbf{A}^H\mathbf{B})$, and the Frobenius norm is denoted by $\|\cdot\|_F$.

\section{System Model and Performance Metrics}\label{sec:system_model}

This section establishes the signal and sensing models and performance metrics used in the
subsequent statistical analysis. We first construct the random MIMO-OFDM waveform and its array
response, following the above derivation. Then we formulate point-target matched filtering and define the single-symbol 3D-AF and its
sidelobe metrics, and finally introduce the multi-symbol and FST models.

Consider a MIMO-OFDM ISAC system with $N$ subcarriers, $N_t$ transmit antennas, $N_r$ receive
antennas, and $N_s$ data streams on each subcarrier. The wireless propagation channel is assumed to be narrowband and located in the far-field region. Both the transmit and receive antenna arrays are configured as uniform linear arrays (ULAs) with half-wavelength inter-element spacing. The length of the cyclic prefix (CP) is set to be no less than the maximum excess delay defined in the channel model.
Based on these system settings, the single-symbol three-dimensional ambiguity function (3D-AF) is defined over a mixed discrete-continuous domain: integer cyclic-delay and Doppler frequency grids, and a continuous angular interval.
\subsection{Random MIMO-OFDM Signal and Array Model}

\begin{definition}[ULA Steering Vectors]\label{def:steering_vectors}
  For a direction $\theta$ measured from array broadside, the transmit and receive steering vectors are defined as
  \begin{align*}
    \mathbf{a}_t(\theta)
     & :=
    \frac{1}{\sqrt{N_t}}
    \left[
      1, e^{j\pi\sin\theta}, \ldots,
      e^{j\pi(N_t-1)\sin\theta}
      \right]^T
    \in\CN^{N_t}, \\
    \mathbf{a}_r(\theta)
     & :=
    \frac{1}{\sqrt{N_r}}
    \left[
      1, e^{j\pi\sin\theta}, \ldots,
      e^{j\pi(N_r-1)\sin\theta}
      \right]^T
    \in\CN^{N_r}.
  \end{align*}
\end{definition}

With the array responses specified, we next introduce the random data symbols carried by the OFDM
resource grid and construct the corresponding array-domain transmit waveform.

Consider an OFDM resource grid with $N$ subcarriers. On the $n$th subcarrier, let
$\mathbf{s}_n\in\CN^{N_s}$ denote the corresponding random data-symbol vector. Stacking these
vectors over all subcarriers gives
\begin{equation}
  \mathbf{s}
  :=
  \begin{bmatrix}
    \mathbf{s}_0^T, \mathbf{s}_1^T, \cdots, \mathbf{s}_{N-1}^T
  \end{bmatrix}^T
  \in\CN^{NN_s}.
\end{equation}
\begin{assumption}[Random Symbol Statistics]\label{ass:symbol_statistics}
  The scalar data symbols on all subcarriers and streams are proper, mutually independent, and
  identically distributed with unit average power. In particular, for every scalar symbol $s_i$,
  \begin{equation}
    \mathbb{E}\left[s_i\right]=0,\quad
    \mathbb{E}\left[|s_i|^2\right]=1,\quad
    \mathbb{E}\left[s_i^2\right]=0.
    \label{eq:symbol_assumptions}
  \end{equation}
\end{assumption}

\begin{definition}[Constellation Kurtosis]\label{def:constellation_kurtosis}
  Under Assumption~\ref{ass:symbol_statistics}, the kurtosis of the constellation is defined as
  \begin{equation}
    \kappa:=\frac{\mathbb{E}\!\left[|s_i-\mathbb{E}\{s_i\}|^4\right]}
    {\mathbb{E}\!\left[|s_i-\mathbb{E}\{s_i\}|^2\right]^2}
    =\mathbb{E}\!\left[|s_i|^4\right].
    \label{eq:constellation_kurtosis}
  \end{equation}
\end{definition}

In particular, QPSK and higher-order uniform PSK are proper and have constant modulus, and thus
$\kappa=1$, whereas QAM has non-constant modulus but still satisfies $1<\kappa<2$. The kurtosis
values of typical proper PSK and QAM constellations are listed in Table~\ref{tab:kurtosis}~\cite{liu2025cpofdm}.

\begin{table}[H]
  \caption{Kurtosis Values of Typical Proper PSK and QAM Constellations}
  \label{tab:kurtosis}
  \centering
  \footnotesize
  \renewcommand{\arraystretch}{1.12}
  \setlength{\tabcolsep}{4.2pt}
  \begin{tabular}{lcccc}
    \hline\hline
    Constellation & Proper PSK & 16-QAM  & 64-QAM   & 128-QAM  \\
    \hline
    Kurtosis      & 1          & 1.32    & 1.381    & 1.3427   \\
    \hline
    Constellation & 256-QAM    & 512-QAM & 1024-QAM & 2048-QAM \\
    \hline
    Kurtosis      & 1.3953     & 1.3506  & 1.3988   & 1.3525   \\
    \hline\hline
  \end{tabular}
\end{table}

Let $\mathbf{W}_n\in\CN^{N_t\times N_s}$ denote the digital precoder on the $n$th subcarrier. The
corresponding spatially precoded transmit vector is
\begin{equation}
  \mathbf{v}_n
  :=
  \mathbf{W}_n\mathbf{s}_n
  \in\CN^{N_t}.
\end{equation}
\indent Collecting these vectors column by column produces the frequency-domain transmit matrix
\begin{equation}
  \mathbf{V}
  :=
  \begin{bmatrix}
    \mathbf{v}_0, \mathbf{v}_1, \cdots, \mathbf{v}_{N-1}
  \end{bmatrix}
  \in\CN^{N_t\times N},
\end{equation}
whose rows correspond to transmit antennas and whose columns correspond to subcarriers. Let
$\mathbf{F}_N\in\CN^{N\times N}$ denote the normalized discrete Fourier transform (DFT) matrix,
\begin{equation}
  [\mathbf{F}_N]_{m,n}
  =
  \frac{1}{\sqrt{N}}e^{-j2\pi mn/N},
  \qquad
  m,n=0,\ldots,N-1.
\end{equation}
Accordingly, $\mathbf{F}_N^H$ is the IDFT matrix. Applying the IDFT along the subcarrier columns for
every transmit antenna gives the time-domain OFDM block
\begin{equation}
  \mathbf{X}
  :=
  \mathbf{V}\mathbf{F}_N^H
  \in\CN^{N_t\times N},
\end{equation}
where rows correspond to transmit antennas and columns correspond to fast-time samples.

For a given transmit angle $\theta$, the physically radiated equivalent fast-time waveform is
obtained by projecting the antenna rows of $\mathbf{X}$ onto $\mathbf{a}_t(\theta)$. Hence,
\begin{equation}
  \mathbf{x}(\theta)
  :=
  \mathbf{X}^T\mathbf{a}_t(\theta)
  = \mathbf{F}_N^H\mathbf{V}^T\mathbf{a}_t(\theta)
  \in\CN^N.
\end{equation}
The vector before the IDFT can be expanded subcarrier by subcarrier as
\begin{align*}
  \mathbf{V}^T\mathbf{a}_t(\theta)
   & =
  \begin{bmatrix}
    \mathbf{v}_0^T\mathbf{a}_t(\theta) \\
    \mathbf{v}_1^T\mathbf{a}_t(\theta) \\
    \vdots                             \\
    \mathbf{v}_{N-1}^T\mathbf{a}_t(\theta)
  \end{bmatrix}
  \nonumber \\
   & =
  \begin{bmatrix}
    \mathbf{a}_t^T(\theta)\mathbf{W}_0\mathbf{s}_0 \\
    \mathbf{a}_t^T(\theta)\mathbf{W}_1\mathbf{s}_1 \\
    \vdots                                         \\
    \mathbf{a}_t^T(\theta)\mathbf{W}_{N-1}\mathbf{s}_{N-1}
  \end{bmatrix}.
\end{align*}
Because the communication symbols have already been stacked into $\mathbf{s}$, collect the $N$
direction-dependent row mappings directly into
\begin{equation}
  \begin{aligned}
    \mathbf{G}_T(\theta)
     & :=
    \begin{bmatrix}
      \mathbf{a}_t^T(\theta)\mathbf{W}_0 & \mathbf{0} & \cdots & \mathbf{0} \\
      \mathbf{0} & \mathbf{a}_t^T(\theta)\mathbf{W}_1 & \cdots & \mathbf{0} \\
      \vdots & \vdots & \ddots & \vdots \\
      \mathbf{0} & \mathbf{0} & \cdots &
      \mathbf{a}_t^T(\theta)\mathbf{W}_{N-1}
    \end{bmatrix} \\[-1pt]
     & =
    \operatorname{blkdiag}\!\left(
      \mathbf{a}_t^T(\theta)\mathbf{W}_0,\ldots,
      \mathbf{a}_t^T(\theta)\mathbf{W}_{N-1}
    \right),
  \end{aligned}
  \label{eq:GT_def}
\end{equation}
where each diagonal block has size $1\times N_s$.
By construction, the preceding stacked relation becomes
$\mathbf{V}^T\mathbf{a}_t(\theta)=\mathbf{G}_T(\theta)\mathbf{s}$. Inserting this relation into the
directionally projected transmit waveform yields
\begin{equation}
  \mathbf{x}(\theta) = \mathbf{F}_N^H\mathbf{G}_T(\theta)\mathbf{s}.
  \label{eq:x_theta}
\end{equation}
The model permits frequency-dependent digital precoding through $\mathbf{W}_n$ and therefore
includes both frequency-flat and subcarrier-selective beamformers. Within this precoding framework, the matrix
$\mathbf{G}_T(\theta)$ is not an additional physical precoder; it is the direction-dependent
projection of the subcarrier precoders, introduced to separate the common random symbol vector from
deterministic spatial processing.

\subsection{Point-Target Sensing Model}

We consider a monostatic sensing configuration with $L$ far-field point targets. The
$\ell$th target is described by its complex scattering coefficient $\beta_\ell$ and the
three coordinates $(k_\ell,q_\ell,\theta_\ell)$, representing its cyclic delay, intra-symbol
Doppler, and direction, respectively. The receiver evaluates a candidate target template
at $(k,q,\theta)$, so the matched-filter response is determined jointly by the delay,
Doppler, and angular mismatches between the template and each target. The resulting search
domain is therefore discrete in delay and Doppler and continuous in angle.

\begin{figure}[H]
  \centering
  \includegraphics[width=0.99\linewidth]{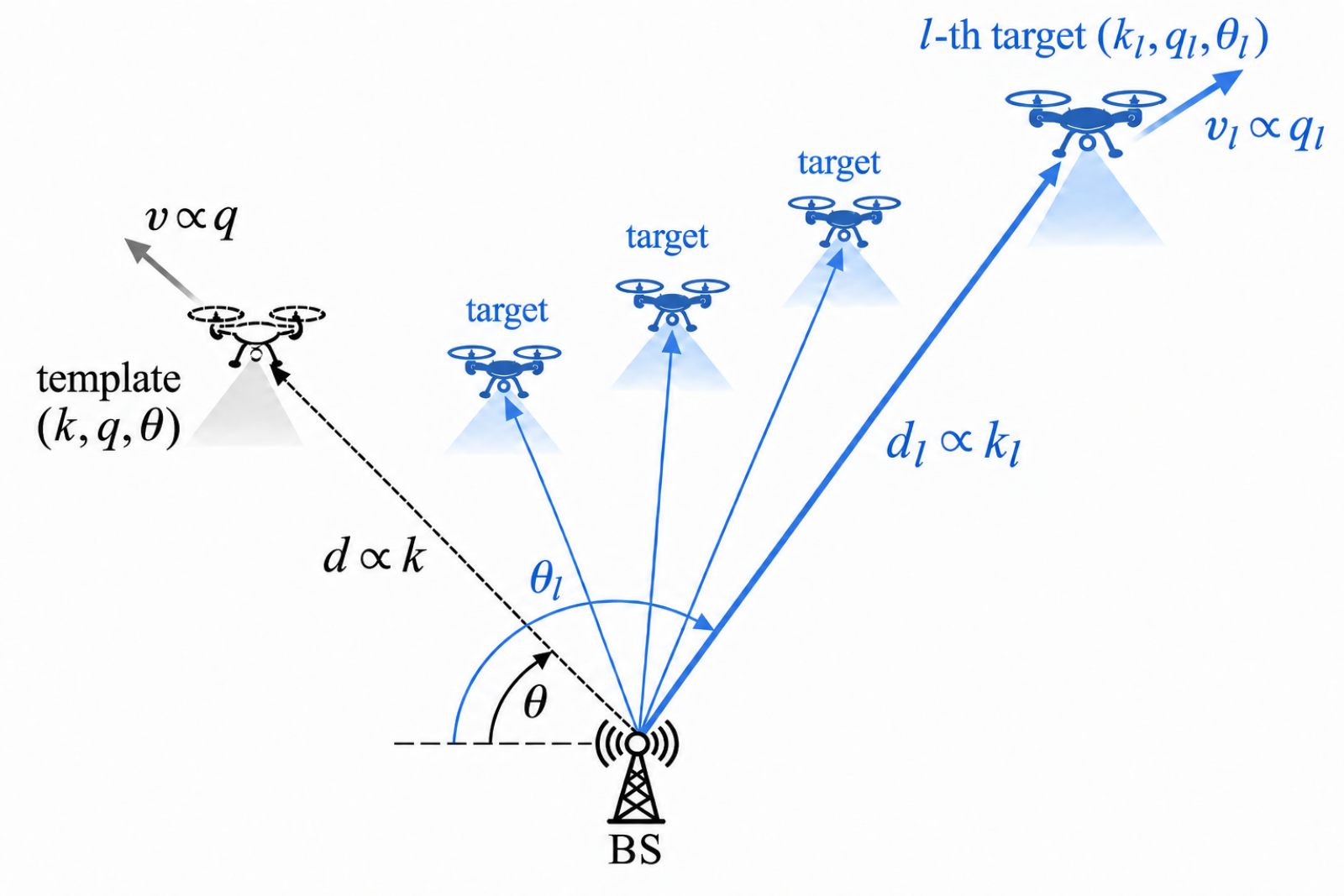}
  \caption{Geometric interpretation of the three-dimensional sensing search: a template at
    $(k,q,\theta)$ is compared with a point target at $(k_\ell,q_\ell,\theta_\ell)$, where
    the delay, Doppler, and angle coordinates determine the relative range, velocity, and
    direction, respectively.}
  \label{fig:point_target_geometry}
\end{figure}

As shown in Fig.~\ref{fig:point_target_geometry}, the template direction $\theta$ is
compared with the target direction $\theta_\ell$, while the delay and Doppler coordinates
map to the corresponding relative range and velocity. For a target at a generic direction
$\theta$, the monostatic far-field assumption gives the same departure and arrival direction,
so its deterministic narrowband spatial response is
\begin{equation}
  \mathbf{H}_s(\theta)
  :=
  \mathbf{a}_r(\theta)\mathbf{a}_t^T(\theta)
  \in\CN^{N_r\times N_t},
  \label{eq:Hs_def}
\end{equation}
where $\mathbf{a}_t(\theta)\in\CN^{N_t}$ and $\mathbf{a}_r(\theta)\in\CN^{N_r}$ are the transmit and
receive steering vectors, respectively. The rank-one matrix $\mathbf{H}_s(\theta)$ captures only the
spatial departure-arrival response; the complex scattering coefficient is kept separate in the
echo model below.

Following the discrete cyclic delay-Doppler convention in \cite{zhang2025discreteaf}, let
$k,q\in\mathbb{Z}_N:=\{0,1,\ldots,N-1\}$ denote the cyclic-delay and intra-symbol Doppler
indices, respectively. The delay operator is specified in the frequency domain by
\begin{equation}
  \boldsymbol{\Lambda}_{N,k}
  :=
  \operatorname{diag}
  \left(
  1,e^{-j2\pi k/N},\ldots,e^{-j2\pi (N-1)k/N}
  \right).
\end{equation}
The equivalent time-domain cyclic-delay matrix is
\begin{equation}
  \mathbf{J}_{N,k}
  :=
  \mathbf{F}_N^H\boldsymbol{\Lambda}_{N,k}\mathbf{F}_N
  \in\CN^{N\times N}.
  \label{eq:J_def}
\end{equation}
The intra-symbol Doppler shift is represented by
\begin{equation}
  \mathbf{D}_{N,q}
  :=
  \operatorname{diag}
  \left(
  1,e^{j2\pi q/N},\ldots,e^{j2\pi q(N-1)/N}
  \right)
  \in\CN^{N\times N}.
  \label{eq:D_def}
\end{equation}
Thus, $\boldsymbol{\Lambda}_{N,k}$ and $\mathbf{J}_{N,k}$ are frequency-domain and time-domain
representations of the same cyclic delay, whereas $\mathbf{D}_{N,q}$ is the frequency-shift matrix.

 Following the cyclic-prefix insertion and removal procedure in~\cite{zhang2025discreteaf},
 the linear delay over each OFDM block is represented by the cyclic-delay operator
 $\mathbf{J}_{N,k}$ on the CP-free samples. After CP insertion at the transmitter and CP removal
 at the receiver, superposing the returns from $L$ point targets and adding receiver noise gives
\begin{equation}
  \mathbf{Y}
  =
  \sum_{\ell=1}^{L}
  \beta_\ell
  \mathbf{H}_s(\theta_\ell)
  \mathbf{X}
  \mathbf{J}_{N,k_\ell}^T
  \mathbf{D}_{N,q_\ell}^T
  +
  \mathbf{Z},
  \label{eq:point_echo}
\end{equation}
where $\mathbf{Y},\mathbf{Z}\in\CN^{N_r\times N}$, and $\beta_\ell\in\CN$, $k_\ell$, $q_\ell$,
and $\theta_\ell$ denote the complex scattering coefficient, discrete delay, discrete fast-time
Doppler, and direction of the $\ell$th target, respectively. Both $\mathbf{X}$ and $\mathbf{Y}$ place
antenna elements along rows and fast-time samples along columns, so the delay and Doppler operators
act from the right along the fast-time dimension.

\begin{remark}[Model Scope]\label{rem:model_scope}
  The requirement that modeled delays do not exceed the CP is common in block-based OFDM
  sensing~\cite{wei2024mimoofdm,crossdomain2024,resourceoccupancy2024}.
  Here, the single-symbol 3D-AF additionally adopts the integer cyclic delay-Doppler grid
  following~\cite{zhang2025discreteaf}. Fractional delay, beam squint, and timing or carrier-synchronization errors
  unrelated to the target are outside the present model.
\end{remark}

\subsection{3D-AF and Performance Metrics}

We next define the matched-filter response and the ambiguity metrics used throughout the
statistical analysis. For a template parameter tuple $(k,q,\theta)$, the receiver uses the following
unit-scattering point-target template.

\begin{definition}[Point-Target Template and Matched-Filter Output]\label{def:point_template_mf}
  The unit-scattering point-target template and the corresponding matched-filter output are given by
  \begin{align}
    \mathbf{T}_{\rm p}(k,q,\theta)
     & :=
    \mathbf{H}_s(\theta)
    \mathbf{X}
    \mathbf{J}_{N,k}^T
    \mathbf{D}_{N,q}^T
    \in\CN^{N_r\times N},
    \label{eq:point_template} \\
    z_{\rm MF}(k,q,\theta)
     & =
    \left\langle
    \mathbf{T}_{\rm p}(k,q,\theta),\mathbf{Y}
    \right\rangle_F \nonumber\\
     & =
    \operatorname{Tr}
    \left[
      \mathbf{T}_{\rm p}^H(k,q,\theta)\mathbf{Y}
      \right].
    \label{eq:mf_output}
  \end{align}
\end{definition}
The scattering coefficient is excluded from the template because it is an unknown target amplitude
estimated from the matched-filter output. Substituting \eqref{eq:point_echo} into
\eqref{eq:mf_output} and expanding the Frobenius inner product yield
\begin{align}
  z_{\rm MF}(k,q,\theta)
   & =
  \sum_{\ell=1}^{L}
  \beta_\ell\operatorname{Tr}\!\bigl\{
    \left(
      \mathbf{H}_s(\theta)\mathbf{X}
      \mathbf{J}_{N,k}^T\mathbf{D}_{N,q}^T
    \right)^H
  \nonumber\\[-1pt]
   & \quad\times
  \mathbf{H}_s(\theta_\ell)\mathbf{X}
  \mathbf{J}_{N,k_\ell}^T\mathbf{D}_{N,q_\ell}^T
  \bigr\}
  +z_{k,q,\theta}
  \label{eq:mf_trace_expansion}\\
   & =
  \sum_{\ell=1}^{L}
  \beta_\ell
  \bigl[
    \mathbf{a}_r^H(\theta)\mathbf{a}_r(\theta_\ell)
  \bigr]
  \mathbf{a}_t^H(\theta)\mathbf{X}^*
  \mathbf{J}_{N,k}^T
  \nonumber\\[-1pt]
   & \quad\times
  \mathbf{D}_{N,q}^*\mathbf{D}_{N,q_\ell}
  \mathbf{J}_{N,k_\ell}
  \mathbf{X}^T\mathbf{a}_t(\theta_\ell)
  +z_{k,q,\theta}
  \nonumber\\
   & =
  \sum_{\ell=1}^{L}
  \beta_\ell e^{-j2\pi k_\ell(q-q_\ell)/N}
  \nonumber\\[-1pt]
   & \quad\times
  \bigl[
    \mathbf{a}_r^H(\theta)\mathbf{a}_r(\theta_\ell)
  \bigr]
  \mathbf{a}_t^H(\theta)\mathbf{X}^*
  \mathbf{J}_{N,k-k_\ell}^T
  \nonumber\\[-1pt]
   & \quad\times
  \mathbf{D}_{N,q-q_\ell}^*
  \mathbf{X}^T\mathbf{a}_t(\theta_\ell)
  +z_{k,q,\theta},
  \label{eq:mf_expansion}
\end{align}
where
$z_{k,q,\theta}:=
\langle\mathbf{T}_{\rm p}(k,q,\theta),\mathbf{Z}\rangle_F$
is the filtered noise. All delay and Doppler index differences in \eqref{eq:mf_expansion} are
interpreted modulo $N$. The final equality applies the circular delay-Doppler commutation relation
in \cite{zhang2025discreteaf} and exposes a common scalar kernel whose arguments are the
delay-Doppler mismatches and the test and target angles. This kernel motivates the following
definition.

\begin{definition}[Cross and Self 3D Ambiguity Functions]\label{def:cross_self_af}
  For a template at $(k,q,\theta)$ and a unit-scattering reference target at
  $(0,0,\vartheta)$, the point-target cross 3D delay-Doppler-angle AF is
  \begin{equation}
    \resizebox{0.90\columnwidth}{!}{$\displaystyle \mathcal{A}_{3D}(k,q;\theta,\vartheta):=\rho_r(\theta,\vartheta)\mathbf{a}_t^H(\theta)\mathbf{X}^*\mathbf{J}_{N,k}^T\mathbf{D}_{N,q}^*\mathbf{X}^T\mathbf{a}_t(\vartheta)$}.
    \label{eq:cross_af_def}
  \end{equation}
  The self 3D-AF is obtained by setting the target angle equal to the test angle, i.e.,
  \begin{equation}
    \begin{aligned}
      \mathcal{A}_{3D}(k,q,\theta)&:=\mathcal{A}_{3D}(k,q;\theta,\theta)\\[-1pt]
       &=\rho_r(\theta,\theta)\mathbf{a}_t^H(\theta)\mathbf{X}^*\mathbf{J}_{N,k}^T\mathbf{D}_{N,q}^*\mathbf{X}^T\mathbf{a}_t(\theta),
    \end{aligned}
  \end{equation}
  where $\theta$ and $\vartheta$ denote the test and target angles, respectively, and
  $\rho_r(\theta,\vartheta):=
  \mathbf{a}_r^H(\theta)\mathbf{a}_r(\vartheta)$ denotes their receive-array angular correlation.
\end{definition}

Using Definition~\ref{def:cross_self_af}, the matched-filter expansion in
\eqref{eq:mf_expansion} becomes
\begin{equation}
  \begin{aligned}
    z_{\mathrm{MF}}(k,q,\theta)
    &= \sum_{\ell=1}^{L}\beta_\ell e^{-j2\pi k_\ell(q-q_\ell)/N} \\
    &\quad \times \mathcal{A}_{3\mathrm{D}}(k-k_\ell,q-q_\ell;\theta,\theta_\ell) + z_{k,q,\theta}.
  \end{aligned}
  \label{eq:shifted_af}
\end{equation}

Thus, $k$ and $q$ act as mismatch coordinates after shifting by the true delay and Doppler cells.

\begin{remark}[Delay-Doppler Translation and Angular Dependence]\label{rem:angle_nontranslation}
  Equation~\eqref{eq:shifted_af} also clarifies the different roles of the three parameters. For a
  target in cell $(k_\ell,q_\ell,\theta_\ell)$, the delay and Doppler arguments appear as the
  differences $k-k_\ell$ and $q-q_\ell$, apart from a unit-modulus phase factor. Hence, under the
  adopted cyclic model, translating the true delay or Doppler translates the matched-filter
  response without changing its magnitude profile. The angular dependence is different: it is
  carried by $\rho_r(\theta,\theta_\ell)$ and by the pair $\mathbf{G}_T^H(\theta)$ and
  $\mathbf{G}_T(\theta_\ell)$. Because a ULA response depends on $\sin\theta$ rather than on a
  cyclic shift of $\theta$, two targets separated by the same physical angular offset need not
  produce identical cuts when their target angles differ. Thus, the angle lacks translation
  invariance.
\end{remark}

Having related the matched-filter output to the cross 3D-AF, we next isolate the random-symbol
dependence from the deterministic delay-Doppler and spatial operators. Substituting
\eqref{eq:x_theta} into \eqref{eq:cross_af_def} gives
\begin{align}
  \mathcal{A}_{3D}(k,q;\theta,\vartheta)
   & =
  \rho_r(\theta,\vartheta)
  \left(
    \mathbf{F}_N^H\mathbf{G}_T(\theta)\mathbf{s}
  \right)^H
  \mathbf{J}_{N,k}^T\mathbf{D}_{N,q}^*
  \nonumber\\[-1pt]
   & \quad\times
  \left(
    \mathbf{F}_N^H\mathbf{G}_T(\vartheta)\mathbf{s}
  \right)
  \nonumber\\
   & =
  \mathbf{s}^H\bigl[
    \rho_r(\theta,\vartheta)
    \mathbf{G}_T^H(\theta)
    \mathbf{F}_N\mathbf{J}_{N,k}^T\mathbf{D}_{N,q}^*
  \nonumber\\[-1pt]
   & \quad\times
    \mathbf{F}_N^H\mathbf{G}_T(\vartheta)
  \bigr]\mathbf{s}.
  \label{eq:quadratic_derivation}
\end{align}
The bracketed deterministic operator naturally leads to the following two definitions.

\begin{definition}[Delay-Doppler Kernel]\label{def:delay_doppler_kernel}
  The frequency-domain delay-Doppler kernel is defined by
  \begin{equation}
    \boldsymbol{\Psi}_{k,q}
    :=
    \mathbf{F}_N
    \mathbf{J}_{N,k}^T
    \mathbf{D}_{N,q}^*
    \mathbf{F}_N^H
    \in\CN^{N\times N}.
    \label{eq:Psi_def}
  \end{equation}
\end{definition}

\begin{definition}[Deterministic 3D-AF Kernel Matrix]\label{def:af_matrix}
  For a delay-Doppler cell $(k,q)$, test angle $\theta$, and target angle $\vartheta$, the
  deterministic 3D-AF kernel matrix is defined as
  \begin{equation}
    \mathbf{M}_{k,q}(\theta,\vartheta)
     :=
    \rho_r(\theta,\vartheta)\mathbf{G}_T^H(\theta)
    \boldsymbol{\Psi}_{k,q}\mathbf{G}_T(\vartheta).
    \label{eq:M_def}
  \end{equation}
\end{definition}

Combining Definitions~\ref{def:delay_doppler_kernel} and~\ref{def:af_matrix} with~\eqref{eq:quadratic_derivation} yields the quadratic-form representation
\begin{equation}
  \mathcal{A}_{3D}(k,q;\theta,\vartheta)
  =
  \mathbf{s}^H
  \mathbf{M}_{k,q}(\theta,\vartheta)
  \mathbf{s}.
\label{eq:quadratic_af}
\end{equation}

For a target at angle $\theta_0$, let
\begin{equation}
  \Omega:=\mathcal{K}\times\mathcal{Q}\times\Theta,
\end{equation}
where $\mathcal{K},\mathcal{Q}\subseteq\mathbb{Z}_N$ are finite delay and Doppler sets containing
zero and $\Theta$ is the angular search interval, expressed in radians. If
$\Theta_{\rm ml}(\theta_0)\subseteq\Theta$ is the prescribed angular mainlobe about $\theta_0$,
the mainlobe and sidelobe regions are
\begin{equation}
  \begin{aligned}
    \Omega_{\rm ml}(\theta_0)
     & :=
    \{(0,0,\theta):\theta\in\Theta_{\rm ml}(\theta_0)\}, \\
    \Omega_{\rm sl}(\theta_0)
     & :=
    \Omega\setminus\Omega_{\rm ml}(\theta_0).
  \end{aligned}
  \label{eq:sidelobe_region}
\end{equation}
For this mixed discrete-continuous domain, use the product measure $\zeta$ defined by
\begin{equation}
  \int_{\Omega} f\,\mathrm{d}\zeta
  :=
  \sum_{k\in\mathcal{K}}\sum_{q\in\mathcal{Q}}
  \int_{\Theta} f(k,q,\theta)\,\mathrm{d}\theta,
\end{equation}
and introduce the sidelobe indicator
\begin{equation}
  \begin{aligned}
    \delta_{k,q}
    & :=
    \begin{cases}
      1, & (k,q)=(0,0),\\
      0, & (k,q)\neq(0,0),
    \end{cases}\\[-1pt]
    \mathbf{1}_{\Theta_{\rm ml}(\theta_0)}(\theta)
    & :=
    \begin{cases}
      1, & \theta\in\Theta_{\rm ml}(\theta_0),\\
      0, & \theta\notin\Theta_{\rm ml}(\theta_0),
    \end{cases}\\[-1pt]
    \chi_{\rm sl}(k,q,\theta)
    & :=
    1-\delta_{k,q}\mathbf{1}_{\Theta_{\rm ml}(\theta_0)}(\theta).
  \end{aligned}
\end{equation}
where $\delta_{k,q}$ is the two-dimensional Kronecker delta that selects the matched
delay-Doppler cell. With these regions and the indicator fixed, the sidelobe metrics are defined as
follows.

\begin{definition}[Three-Dimensional Sidelobe Metrics]\label{def:3d_sidelobe_metrics}
  For a realization $\mathbf{s}$, the instantaneous integrated sidelobe level (ISL), expected
  integrated sidelobe level (EISL), and expected sidelobe level (ESL) are
  \begin{align}
    \ISL_{3D}(\mathbf{s};\theta_0)
     & :=
    \int_{\Omega}\chi_{\rm sl}(k,q,\theta)
    \left|\mathcal{A}_{3D}(k,q;\theta,\theta_0)\right|^2
    \,\mathrm{d}\zeta,
    \label{eq:isl3d_inst} \\
    \EISL_{3D}(\theta_0)
     & :=
    \mathbb{E}\!\left[\ISL_{3D}(\mathbf{s};\theta_0)\right]
    \nonumber\\[-1pt]
     &=\int_{\Omega}\chi_{\rm sl}(k,q,\theta)\mathbb{E}\!\left[|\mathcal{A}_{3D}(k,q;\theta,\theta_0)|^2\right]\mathrm{d}\zeta,
    \label{eq:EISL_def} \\
    \ESL_{3D}(\theta_0)
     & :=
    \frac{\EISL_{3D}(\theta_0)}
    {\zeta\!\left(\Omega_{\rm sl}(\theta_0)\right)}.
    \label{eq:ESL_def}
  \end{align}
  The ESL is defined for a nonempty sidelobe region satisfying
  $0<\zeta(\Omega_{\rm sl}(\theta_0))<\infty$.
\end{definition}

\begin{remark}[Angular Sampling and Metric Comparisons]\label{rem:metric_sampling}
  When angle is sampled on a finite grid, the angular integrals are replaced by weighted sums and
  $\zeta(\Omega_{\rm sl})$ by the total sidelobe weight. Comparisons of EISL or ESL must use the
  same delay-Doppler domain, angular mainlobe boundary, angular weights, and normalization.
\end{remark}

\subsection{Multi-Symbol and FST Models}

We now extend the single-symbol model to coherent processing over $M$ OFDM blocks. The
multi-symbol derivation follows the fast-slow time (FST) route in~\cite{zhang2025discreteaf}:
cyclic-prefix (CP)
operations are made explicit first, and the Doppler phase is then separated into a slow-time
component across blocks and a fast-time component within each block.

Let $\mathbf{C}_{\rm cp}\in\CN^{(N+N_{\rm CP})\times N}$ insert the CP and let
$\mathbf{R}_{\rm cp}\in\CN^{N\times(N+N_{\rm CP})}$ remove it, with
$\mathbf{R}_{\rm cp}\mathbf{C}_{\rm cp}=\mathbf{I}_N$. For the $m$th OFDM block, let
$\mathbf{x}_m$ denote the CP-free time-domain transmit vector and define the serialized CPI vector
\begin{equation}
  \widetilde{\mathbf{x}}
  :=
  \begin{bmatrix}
    (\mathbf{C}_{\rm cp}\mathbf{x}_0)^T, \cdots,
    (\mathbf{C}_{\rm cp}\mathbf{x}_{M-1})^T
  \end{bmatrix}^{T}
  \label{eq:fst_serial_cpi}
\end{equation}
For a target $\ell$, let $q_{0,\ell}$ denote its integer Doppler index on the serialized
$M(N+N_{\rm CP})$-sample CPI grid. The Doppler phase at sample $(m,n)$, where
$m=0,\ldots,M-1$ and $n=0,\ldots,N+N_{\rm CP}-1$, factors as
\begin{equation}
 \begin{gathered}
 \resizebox{0.90\columnwidth}{!}{$\displaystyle
 e^{\frac{j2\pi q_{0,\ell}[m(N+N_{\rm CP})+n]}{M(N+N_{\rm CP})}}
 =e^{\frac{j2\pi q_{0,\ell}m}{M}}
 e^{\frac{j2\pi q_{0,\ell}n}{M(N+N_{\rm CP})}}$}\\[-1pt]
 \approx e^{\frac{j2\pi q_{0,\ell}m}{M}}.
 \end{gathered}
 \label{eq:fst_phase_factorization}
\end{equation}
For a given target $\ell$, define the dimensionless within-block Doppler-phase parameter
$\eta$. The approximation above holds under the FST condition
\begin{equation}
 \eta:=\frac{2\pi |q_{0,\ell}|N}{M(N+N_{\rm CP})}\ll1.
  \label{eq:fst_small_phase}
\end{equation}
The parameter $\eta$ measures the magnitude of the Doppler-phase change over $N$ sample
intervals within one OFDM block; the slow-time rotation across blocks is retained.
It is evaluated separately for each target, and the FST condition requires $\eta\ll1$
for every modeled target. 

For the serialized CPI grid, the slow-time Doppler bin is the residue of the
serialized index modulo the number of OFDM blocks, i.e.,
\begin{equation}
  q_\ell := [q_{0,\ell}]_{M}\in\mathbb{Z}_{M},
  \label{eq:q0_to_q_mapping}
\end{equation}
where $[\cdot]_M$ denotes the residue modulo $M$ (or its selected centered representative).
Consequently,
\begin{equation}
  e^{j2\pi q_{0,\ell}m/M}=e^{j2\pi q_\ell m/M},\qquad m=0,\ldots,M-1.
  \label{eq:slow_time_phase_mapping}
\end{equation}
Thus, $q_{0,\ell}$ and $q_\ell$ produce the same slow-time phase across the $M$ blocks,
whereas $q_{0,\ell}$ additionally determines the residual within-block phase
$e^{j2\pi q_{0,\ell}n/[M(N+N_{\rm CP})]}$ that is neglected under the FST condition.

Under \eqref{eq:fst_small_phase}, the fast-time phase in each CP-extended block is replaced by a
common block phase. The common phase associated with the CP position is absorbed into the complex
target coefficient $\beta_\ell$, and the serialized Doppler operator satisfies
\begin{equation}
 \mathbf{D}_{M(N+N_{\rm CP}),q_{0,\ell}} =
 \underbrace{\operatorname{diag}\!\left(
  1,\ldots,e^{j2\pi(M-1)q_\ell/M}
 \right)}_{\displaystyle \mathbf{D}_{M,q_\ell}}
 \otimes \mathbf{I}_{N+N_{\rm CP}}
\label{eq:fst_doppler_kron}
\end{equation}
where $q_\ell$ is the retained slow-time Doppler bin. Thus, the slow-time phase carried by block
$m$ is explicitly $e^{j2\pi m q_\ell/M}$. After CP removal, the blockwise received signal is
\begin{equation}
\mathbf{Y}_m =\sum_{\ell=1}^{L}\beta_\ell e^{j2\pi m q_\ell/M}\mathbf{H}_s(\theta_\ell)\mathbf{X}_m\mathbf{J}_{N,k_\ell}^{T}+\mathbf{Z}_m.
\label{eq:fst_block_echo}
\end{equation}
where $m=0,\ldots,M-1$. Here $\mathbf{Y}_m,\mathbf{Z}_m\in\CN^{N_r\times N}$, and the delay and angle parameters have
the same meanings as in the single-symbol model. In the FST subsection, $q\in\mathbb{Z}_M$
always denotes a slow-time Doppler bin; it is distinct from the single-symbol
intra-symbol Doppler index $q\in\mathbb{Z}_N$ introduced in Section~\ref{sec:system_model}.

For completeness, define the $m$th symbol data and transmit matrices by
\begin{equation}
  \begin{aligned}
  \mathbf{s}_m&:=
  \begin{bmatrix}\mathbf{s}_{m,0}^{T}, \cdots, \mathbf{s}_{m,N-1}^{T}\end{bmatrix}^{T},
  \\
  \mathbf{V}_m&:=
  \begin{bmatrix}\mathbf{W}_{m,0}\mathbf{s}_{m,0}, \cdots, \mathbf{W}_{m,N-1}\mathbf{s}_{m,N-1}\end{bmatrix},
  \\
  \mathbf{X}_m&=\mathbf{V}_m\mathbf{F}_N^{H}.
  \end{aligned}
  \label{eq:fst_symbol_matrices}
\end{equation}
Let
\begin{equation}
  \begin{aligned}
  \mathbf{G}_{T,m}(\theta)&:=
  \operatorname{blkdiag}_{n=0}^{N-1}\!\left\{\mathbf{a}_t^{T}(\theta)\mathbf{W}_{m,n}\right\},
  \\
  \mathbf{X}_m^{T}\mathbf{a}_t(\theta)&=\mathbf{F}_N^{H}\mathbf{G}_{T,m}(\theta)\mathbf{s}_m.
  \end{aligned}
  \label{eq:fst_x_theta}
\end{equation}
Stack the symbol vectors as
\begin{equation}
  \begin{aligned}
  \bar{\mathbf{s}}&:=
  \begin{bmatrix}\mathbf{s}_0^{T}, \cdots, \mathbf{s}_{M-1}^{T}\end{bmatrix}^{T}\in\CN^{MNN_s},\\[-1pt]
  \bar{\mathbf{G}}_T(\theta)&:=
  \operatorname{blkdiag}\!\left(\mathbf{G}_{T,0}(\theta),\ldots,\mathbf{G}_{T,M-1}(\theta)\right).
  \end{aligned}
  \label{eq:fst_stacked_symbols}
\end{equation}
The same i.i.d.\ symbol-statistics assumption used for the single-symbol analysis is imposed on
all entries of $\bar{\mathbf{s}}$.

\begin{definition}[FST Template and Matched-Filter Output]\label{def:fst_template_mf}
For a test cell $(k,q,\theta)$ with $k\in\mathcal{K}$ and $q\in\mathbb{Z}_M$, let
\begin{equation}
 \mathbf{T}_{{\rm p},m}^{\rm FST}(k,\theta)
 :=
 \mathbf{H}_s(\theta)\mathbf{X}_m\mathbf{J}_{N,k}^{T}.
 \label{eq:fst_template}
\end{equation}
The coherent CPI matched-filter output is
\begin{equation}
 z_{\rm MF}^{\rm FST}(k,q,\theta)
 =
  \sum_{m=0}^{M-1}e^{-j2\pi m q/M}
 \operatorname{Tr}\!\left[
 \mathbf{X}_m^{H}\mathbf{H}_s^{H}(\theta)\mathbf{Y}_m
 \mathbf{J}_{N,k}
 \right].
 \label{eq:fst_mf_def}
\end{equation}
\end{definition}

Using the block echo model, define the FST cross 3D-AF directly from the coherent single-target
contribution as follows.

\begin{definition}[FST Cross 3D Ambiguity Function]\label{def:fst_af}
For a test angle $\theta$ and target angle $\vartheta$, the FST cross 3D-AF is
\begin{align}
 \mathcal{A}_{3D}^{\rm FST}(k,q;\theta,\vartheta)
 &=\rho_r(\theta,\vartheta)
 \sum_{m=0}^{M-1}e^{-j2\pi mq/M}
 \nonumber\\[-1pt]
 &\quad\times
 \mathbf{a}_t^H(\theta)\mathbf{X}_m^*
 \mathbf{J}_{N,k}^{T}\mathbf{X}_m^T\mathbf{a}_t(\vartheta).
 \label{eq:fst_af_def}
\end{align}
The test Doppler $q\in\mathbb{Z}_M$ is a slow-time bin, whereas the single-symbol AF uses
$q\in\mathbb{Z}_N$ for the intra-symbol Doppler coordinate.
\end{definition}
Using this definition, the coherent matched-filter output in \eqref{eq:fst_mf_def} can be
written compactly as
\begin{equation}
 z_{\rm MF}^{\rm FST}(k,q,\theta)
 =\sum_{\ell=1}^{L}\beta_\ell
 \mathcal{A}_{3D}^{\rm FST}
 (k-k_\ell,q-q_\ell;\theta,\theta_\ell)
 +z_{k,q,\theta}^{\rm FST}.
 \label{eq:fst_shifted_af}
\end{equation}
For a target angle $\theta_0$, define the FST search region
\begin{equation}
 \Omega_{\rm FST}:=\mathcal{K}\times\mathbb{Z}_M\times\Theta.
 \label{eq:fst_search_region}
\end{equation}
With an angular mainlobe $\Theta_{\rm ml}(\theta_0)\subseteq\Theta$, the FST mainlobe and sidelobe
regions are
\begin{equation}
 \begin{aligned}
 \Omega_{{\rm FST},\rm ml}(\theta_0)
 :=
 \{(0,0,\theta):\theta\in\Theta_{\rm ml}(\theta_0)\},\\[-1pt]
 \Omega_{{\rm FST},\rm sl}(\theta_0)
 :=
 \Omega_{\rm FST}\setminus\Omega_{{\rm FST},\rm ml}(\theta_0).
 \end{aligned}
 \label{eq:fst_sidelobe_region}
\end{equation}
The product measure contains counting measures on $k$ and $q$ and Lebesgue measure on angle:
\begin{equation}
 \int_{\Omega_{\rm FST}} f\,\mathrm{d}\zeta_{\rm FST}
 :=
 \sum_{k\in\mathcal{K}}\sum_{q\in\mathbb{Z}_M}
 \int_{\Theta} f(k,q,\theta)\,\mathrm{d}\theta.
 \label{eq:fst_product_measure}
\end{equation}

\begin{definition}[FST Sidelobe Metrics]\label{def:fst_sidelobe_metrics}
The instantaneous integrated sidelobe level (ISL), expected integrated sidelobe level (EISL), and
expected sidelobe level (ESL) are
\begin{align}
 \ISL_{3D}^{\rm FST}(\bar{\mathbf{s}};\theta_0)
 &:=
 \int_{\Omega_{{\rm FST},\rm sl}(\theta_0)}
 \left|\mathcal{A}_{3D}^{\rm FST}(k,q;\theta,\theta_0)\right|^2
 \,\mathrm{d}\zeta_{\rm FST}, \\
 \EISL_{3D}^{\rm FST}(\theta_0)
 &:=
 \mathbb{E}\!\left[\ISL_{3D}^{\rm FST}(\bar{\mathbf{s}};\theta_0)\right], \\
 \ESL_{3D}^{\rm FST}(\theta_0)
 &:=
 \frac{\EISL_{3D}^{\rm FST}(\theta_0)}
 {\zeta_{\rm FST}\!\left(\Omega_{{\rm FST},\rm sl}(\theta_0)\right)}.
 \label{eq:fst_metrics}
\end{align}
The ESL is defined when the sidelobe region has finite nonzero measure.
\end{definition}

\section{Statistical Characterization of the 3D-AF}\label{sec:statistical_analysis}

Building on the matched-filter and quadratic-form representations established in
Section~\ref{sec:system_model}, we now characterize the random 3D-AF statistically. We first examine
the deterministic block structure of the single-symbol kernel and then derive its second-order
statistics before extending the analysis to coherent multi-symbol processing and the FST
approximation.

\subsection{Single-Symbol Quadratic-Form Structure}

The matrix $\mathbf{M}_{k,q}$ separates the random symbols from the deterministic sensing and
transmission parameters: $\boldsymbol{\Psi}_{k,q}$ contains the delay-Doppler dependence, whereas
$\rho_r$ and $\mathbf{G}_T$ contain the array and precoding dependence.

\begin{equation}
  \begin{gathered}
    \mathbf{M}_{k,q}(\theta,\vartheta)
    =
    \begin{bmatrix}
      \mathbf{M}_{0,0}^{(k,q)}   & \cdots &
      \mathbf{M}_{0,N-1}^{(k,q)}                   \\
      \vdots                     & \ddots & \vdots \\
      \mathbf{M}_{N-1,0}^{(k,q)} & \cdots &
      \mathbf{M}_{N-1,N-1}^{(k,q)}
    \end{bmatrix}
    \in\CN^{NN_s\times NN_s},\\[-1pt]
    \mathbf{M}_{n,n'}^{(k,q)}(\theta,\vartheta)
    =
    [\mathbf{M}_{k,q}(\theta,\vartheta)]_{n,n'}
    \in\CN^{N_s\times N_s}.
  \end{gathered}
  \label{eq:M_block_partition}
\end{equation}

\begin{proposition}[Subcarrier Block Structure]\label{prop:block_structure}
  Because $\mathbf{G}_T(\theta)$ is block diagonal, $\mathbf{M}_{k,q}(\theta,\vartheta)$ has the
  $N\times N$ block partition in \eqref{eq:M_block_partition}. Its $(n,n')$th block is
  \begin{equation}
    \mathbf{M}_{n,n'}^{(k,q)}(\theta,\vartheta)=\rho_r(\theta,\vartheta)[\boldsymbol{\Psi}_{k,q}]_{n,n'}\mathbf{W}_n^H\mathbf{a}_t^*(\theta)\mathbf{a}_t^T(\vartheta)\mathbf{W}_{n'},
    \label{eq:block_structure}
  \end{equation}
  {On the integer cyclic grid, $[\boldsymbol{\Psi}_{k,q}]_{n,n'}
  =e^{j2\pi nk/N}$ if $n'=n+q\pmod N$, and is zero otherwise.
  Thus, the Doppler mismatch selects a pair of subcarriers, while the delay sets its phase.
  The receive-array correlation and the two precoders weight the angular responses of that pair.
  Each $\mathbf{s}_n^H\mathbf{M}_{n,n'}^{(k,q)}\mathbf{s}_{n'}$ is a contribution to the
  matched-filter 3D-AF; $\mathbf{M}_{n,n'}^{(k,q)}$ is its deterministic coupling weight.}
  The result follows by extracting the $(n,n')$th block of \eqref{eq:M_def}.
\end{proposition}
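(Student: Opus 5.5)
The plan is to prove the two claims of Proposition~\ref{prop:block_structure} separately. The first is the block formula~\eqref{eq:block_structure}, which is pure block-matrix bookkeeping. The second is the explicit form of $[\boldsymbol{\Psi}_{k,q}]_{n,n'}$, which requires a short DFT computation. I expect the second to be the only real obstacle, and the difficulty there is keeping transposes and conjugates of $\mathbf{F}_N$ straight.

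\emph{Block formula.} Write $\mathbf{G}_T(\vartheta)=\operatorname{blkdiag}(\mathbf{g}_0^T(\vartheta),\ldots,\mathbf{g}_{N-1}^T(\vartheta))$ with $\mathbf{g}_n^T(\vartheta):=\mathbf{a}_t^T(\vartheta)\mathbf{W}_n\in\CN^{1\times N_s}$, as in~\eqref{eq:GT_def}. Then $\mathbf{G}_T^H(\theta)$ is block diagonal with $N_s\times 1$ blocks $\mathbf{W}_n^H\mathbf{a}_t^*(\theta)$. Multiplying a block-diagonal matrix on the left of the $N\times N$ matrix $\boldsymbol{\Psi}_{k,q}$ and another on the right, the $(n,n')$th $N_s\times N_s$ block of $\mathbf{G}_T^H(\theta)\boldsymbol{\Psi}_{k,q}\mathbf{G}_T(\vartheta)$ is
\[
\mathbf{W}_n^H\mathbf{a}_t^*(\theta)\,[\boldsymbol{\Psi}_{k,q}]_{n,n'}\,\mathbf{a}_t^T(\vartheta)\mathbf{W}_{n'}.
\]
Because $[\boldsymbol{\Psi}_{k,q}]_{n,n'}$ is a scalar, it can be moved to the front. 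Multiplying by the scalar $\rho_r(\theta,\vartheta)$ from~\eqref{eq:M_def} then gives~\eqref{eq:block_structure}. The consequence that $\mathbf{s}^H\mathbf{M}_{k,q}\mathbf{s}=\sum_{n,n'}\mathbf{s}_n^H\mathbf{M}^{(k,q)}_{n,n'}\mathbf{s}_{n'}$ follows from the same partition.

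\emph{Entries of $\boldsymbol{\Psi}_{k,q}$.} I will use that $\mathbf{F}_N$ is symmetric, so $\mathbf{F}_N^*=\mathbf{F}_N^H$. From~\eqref{eq:J_def} this gives
\[
\mathbf{J}_{N,k}^T=\mathbf{F}_N\boldsymbol{\Lambda}_{N,k}\mathbf{F}_N^H,
\qquad
\boldsymbol{\Psi}_{k,q}=(\mathbf{F}_N\mathbf{F}_N)\,\boldsymbol{\Lambda}_{N,k}\,(\mathbf{F}_N^H\mathbf{D}_{N,q}^*\mathbf{F}_N^H).
\]
Three facts then finish the computation:
\begin{itemize}
\item Summing the geometric series shows that $\mathbf{F}_N\mathbf{F}_N=\mathbf{P}$ is the index-reversal permutation, $[\mathbf{P}]_{n,m}=1$ iff $m\equiv -n \pmod N$.
\item Similarly, $[\mathbf{F}_N^H\mathbf{D}_{N,q}^*\mathbf{F}_N^H]_{m,n'}=\frac1N\sum_t e^{j2\pi t(m+n'-q)/N}$. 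This equals $1$ if $m+n'\equiv q$ and $0$ otherwise.
\item Composing, row $n$ of $\mathbf{P}$ selects $m=-n$. This produces the factor $[\boldsymbol{\Lambda}_{N,k}]_{-n,-n}=e^{j2\pi nk/N}$, and the last factor is nonzero iff $-n+n'\equiv q$, i.e.\ $n'=n+q \pmod N$.
\end{itemize}
This yields the stated entries.

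The step most likely to go wrong is the sign bookkeeping in the last item: the sign of the delay phase $e^{+j2\pi nk/N}$ and the direction of the Doppler shift, $n'=n+q$ rather than $n-q$. I will double-check both on the trivial case $k=q=0$, where $\boldsymbol{\Psi}_{0,0}=\mathbf{I}_N$, and on the case $q=0$, where $\boldsymbol{\Psi}_{k,0}$ must reduce to the diagonal matrix $\boldsymbol{\Lambda}_{N,k}^*$.
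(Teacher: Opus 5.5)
Your proposal is correct and follows the paper's argument: the block formula \eqref{eq:block_structure} comes from extracting the $(n,n')$th block of \eqref{eq:M_def}, using the block-diagonal structure of $\mathbf{G}_T$, exactly as the paper does. The paper asserts the entries of $\boldsymbol{\Psi}_{k,q}$ without derivation, and your DFT computation (with $\mathbf{F}_N\mathbf{F}_N$ the index-reversal permutation) correctly supplies that step, giving $e^{j2\pi nk/N}$ exactly when $n'\equiv n+q \pmod N$; a shorter equivalent route uses $\mathbf{J}_{N,k}^T=\mathbf{J}_{N,k}^H$, since $\mathbf{J}_{N,k}$ is a real permutation, which gives $\boldsymbol{\Psi}_{k,q}=\boldsymbol{\Lambda}_{N,k}^*\mathbf{F}_N\mathbf{D}_{N,q}^*\mathbf{F}_N^H$.
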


\subsection{Expected Squared 3D-AF and Sidelobe Metrics}

The quadratic-form representation reduces the statistical analysis of
the 3D-AF to the second moment of a quadratic form in the random symbol
vector. We first establish the required covariance identity and then
derive the expected squared 3D-AF.

\begin{lemma}[Quadratic-Form Covariance]\label{lem:quadratic_covariance}
  For deterministic matrices $\mathbf{P}$ and $\mathbf{Q}$ of compatible size,
  Assumption~\ref{ass:symbol_statistics} gives the quadratic-form covariance
  \begin{equation}
    \resizebox{0.98\columnwidth}{!}{$\displaystyle
      \operatorname{Cov}\!\left(\mathbf{s}^H\mathbf{P}\mathbf{s},
      \mathbf{s}^H\mathbf{Q}\mathbf{s}\right)
      =\operatorname{Tr}\!\left(\mathbf{P}\mathbf{Q}^H\right)
      +(\kappa-2)\sum_i [\mathbf{P}]_{i,i}[\mathbf{Q}]_{i,i}^*$}.
    \label{eq:lemma_quadratic_covariance}
  \end{equation}
  \emph{Proof:} See Appendix~\ref{app:covariance}.\par\smallskip
\end{lemma}

Under Assumption~\ref{ass:symbol_statistics},
$\mathbb{E}[\mathbf{s}^H\mathbf{P}\mathbf{s}]
  =\operatorname{Tr}(\mathbf{P})$. Hence, the desired second moment is
obtained by adding the squared magnitude of this mean to the covariance
in Lemma~\ref{lem:quadratic_covariance}. Applying this relation to the
quadratic-form representation in \eqref{eq:quadratic_af} gives the
following result.

For compactness, in the following theorem we use
$\mathbf{M}$ to denote $\mathbf{M}_{k,q}(\theta,\vartheta)$.

\begin{theorem}[Expected Squared 3D Ambiguity Function]\label{thm:expected_squared_af}
  Under the system model of Section~\ref{sec:system_model},
  Assumption~\ref{ass:symbol_statistics}, and the quadratic-form representation in
  \eqref{eq:quadratic_af}, the expected squared cross 3D-AF is
  \begin{equation}
\begin{aligned}
% 第一行：仅放左侧长表达式，& 作为全局对齐基准
&\mathbb{E}\!\left[|\mathcal{A}_{3D}(k,q;\theta,\vartheta)|^2\right] \\
% 第二行：第一个等号，和首行左对齐
&= \operatorname{Cov}\!\left(\mathbf{s}^H\mathbf{M}\mathbf{s}, \mathbf{s}^H\mathbf{M}\mathbf{s}\right) +\left|\operatorname{tr}(\mathbf{M})\right|^2 \\
% 第三行：第二个等号，所有项可完整放在一行
&= \left|\operatorname{tr}(\mathbf{M})\right|^2+\operatorname{tr}(\mathbf{M}\mathbf{M}^H) + (\kappa-2)\sum_i|[\mathbf{M}]_{i,i}|^2.
\end{aligned}
\label{eq:expected_squared_af}
\end{equation}
  \emph{Proof:} Apply Lemma~\ref{lem:quadratic_covariance} with
  $\mathbf{P}=\mathbf{Q}=\mathbf{M}$. The mean
  identity above yields the first equality, and
  \eqref{eq:quadratic_af} completes the proof. A full fourth-order
  derivation is given in Appendix~\ref{app:covariance}.\par\smallskip
\end{theorem}

{
For fixed $(k,q,\theta,\vartheta)$, let $\mathbf{M}_{n,n'}$ denote the corresponding
subblock in \eqref{eq:M_block_partition}, with $n,n'\in\mathbb{Z}_N$.
The blockwise form of \eqref{eq:expected_squared_af} is
\begin{equation}
  \begin{aligned}
  &\mathbb{E}\!\left[|\mathcal{A}_{3D}(k,q;\theta,\vartheta)|^2\right]\\
  &=\left|\sum_n\tr(\mathbf{M}_{n,n})\right|^2
    +\sum_{n,n'}\|\mathbf{M}_{n,n'}\|_F^2\\[-1pt]
  &\quad +(\kappa-2)\sum_n\sum_{r=1}^{N_s}
    |[\mathbf{M}_{n,n}]_{r,r}|^2.
  \end{aligned}
  \label{eq:expected_squared_af_blocks}
\end{equation}
The coherent and kurtosis-dependent terms involve only diagonal subcarrier blocks, whereas
the kernel-energy term includes all blocks. Thus, the three terms can be accumulated without
forming the full $NN_s\times NN_s$ matrix.
}

\begin{remark}[Statistical Decomposition and Design Implications]\label{rem:statistical_decomposition}
  The three terms in \eqref{eq:expected_squared_af} have distinct
  interpretations. The coherent term $|\operatorname{Tr}(\mathbf{M})|^2$
  is the squared magnitude of the mean quadratic form. The term
  $\|\mathbf{M}\|_F^2$ measures the total kernel energy, whereas the
  diagonal correction is the only term controlled by the constellation
  kurtosis $\kappa$. For proper Gaussian symbols this correction
  vanishes, whereas for proper QAM and constant-modulus PSK it is
  negative because $\kappa<2$.

  The formula also identifies two complementary design mechanisms. Spatial or frequency-selective
  precoding changes the full matrix $\mathbf{M}$ and can therefore reshape all three terms.
  Constellation shaping changes $\kappa$ and acts only through the diagonal energy. This distinction
  relates to earlier random-signal precoding~\cite{lu2024randomprecoding} and
  constellation-shaping and input-distribution studies~\cite{du2024reshaping,inputdist2024},
  respectively, while \eqref{eq:expected_squared_af} makes the
  interaction explicit over a three-dimensional search domain.
\end{remark}

\begin{corollary}[Deterministic Evaluation of 3D EISL and ESL]\label{cor:eisl_esl}
  Under the conditions of Theorem~\ref{thm:expected_squared_af}, the EISL in
  Definition~\ref{def:3d_sidelobe_metrics} is given below. For each integration point
  $(k,q,\theta)$, we write $\mathbf{M}:=\mathbf{M}_{k,q}(\theta,\theta_0)$ for compactness:
\begin{equation}
  \begin{aligned}
    \EISL_{3D}(\theta_0)
     & =
    \int_{\Omega}\chi_{\rm sl}(k,q,\theta)
    \Bigl[ \quad
    |\tr(\mathbf{M})|^2
    +\|\mathbf{M}\|_F^2\\[-1pt]
     & \quad+
    (\kappa-2)\sum_i
    |[\mathbf{M}]_{i,i}|^2
    \Bigr]\,\mathrm{d}\zeta.
    \end{aligned}
    \label{eq:EISL_closed_form}
  \end{equation}
  The ESL follows by dividing \eqref{eq:EISL_closed_form} by
  $\zeta(\Omega_{\rm sl}(\theta_0))$, as specified in \eqref{eq:ESL_def}.

  \emph{Proof:} Substitute \eqref{eq:expected_squared_af} into
  \eqref{eq:EISL_def} and use linearity of expectation. Details are given
  in Appendix~\ref{app:covariance}.\par\smallskip
\end{corollary}
\subsection{FST Statistical Characterization}

Having obtained the FST matched-filter output and cross 3D-AF, we now separate the random
symbol vector from the deterministic sensing operators. Substituting
\eqref{eq:fst_x_theta} into \eqref{eq:fst_af_def} gives the following representation.
For compactness, in the following FST expressions we write
$\mathbf{M}^{\rm FST}$ for
$\bar{\mathbf{M}}_{k,q}^{\rm FST}(\theta,\vartheta)$.
\begin{equation}
 \mathcal{A}_{3D}^{\rm FST}(k,q;\theta,\vartheta)
 =
 \bar{\mathbf{s}}^{H}
 \mathbf{M}^{\rm FST}
 \bar{\mathbf{s}}.
 \label{eq:fst_quadratic}
\end{equation}
The delay kernel that appears in this expression is
\begin{equation}
 \boldsymbol{\Psi}_{k,0}
 :=
 \mathbf{F}_N\mathbf{J}_{N,k}^{T}\mathbf{F}_N^{H}
 \in\CN^{N\times N}.
 \label{eq:fst_psi}
\end{equation}

\begin{proposition}[FST Quadratic-Form Matrix]\label{prop:fst_matrix}
For $q\in\mathbb{Z}_M$, the deterministic matrix in \eqref{eq:fst_quadratic} is
\begin{equation}
 \mathbf{M}^{\rm FST}
 =
 \rho_r(\theta,\vartheta)
 \bar{\mathbf{G}}_T^{H}(\theta)
 \left(\mathbf{D}_{M,q}^{*}\otimes\boldsymbol{\Psi}_{k,0}\right)
 \bar{\mathbf{G}}_T(\vartheta).
 \label{eq:fst_matrix}
\end{equation}
\end{proposition}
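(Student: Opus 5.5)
The plan is to repeat the single-symbol derivation in \eqref{eq:quadratic_derivation} block by block and then reassemble the $M$ blocks into a Kronecker structure. First, fix $m$ and substitute the FST projected waveform \eqref{eq:fst_x_theta}, $\mathbf{X}_m^T\mathbf{a}_t(\vartheta)=\mathbf{F}_N^H\mathbf{G}_{T,m}(\vartheta)\mathbf{s}_m$, into the $m$th summand of \eqref{eq:fst_af_def}. Conjugating the same identity at $\theta$ gives $\mathbf{a}_t^H(\theta)\mathbf{X}_m^*=(\mathbf{X}_m^T\mathbf{a}_t(\theta))^H=\mathbf{s}_m^H\mathbf{G}_{T,m}^H(\theta)\mathbf{F}_N$. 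Each summand then becomes
\begin{equation*}
\mathbf{s}_m^H\mathbf{G}_{T,m}^H(\theta)\,\mathbf{F}_N\mathbf{J}_{N,k}^T\mathbf{F}_N^H\,\mathbf{G}_{T,m}(\vartheta)\mathbf{s}_m
=\mathbf{s}_m^H\mathbf{G}_{T,m}^H(\theta)\boldsymbol{\Psi}_{k,0}\mathbf{G}_{T,m}(\vartheta)\mathbf{s}_m .
\end{equation*}
Here I use \eqref{eq:fst_psi}, or equivalently \eqref{eq:Psi_def} with $q=0$, since $\mathbf{D}_{N,0}=\mathbf{I}_N$.

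Second, collect the slow-time phases. The $M$-block sum $\sum_m e^{-j2\pi mq/M}\mathbf{s}_m^H\mathbf{B}_m\mathbf{s}_m$, with $\mathbf{B}_m:=\mathbf{G}_{T,m}^H(\theta)\boldsymbol{\Psi}_{k,0}\mathbf{G}_{T,m}(\vartheta)$, equals $\bar{\mathbf{s}}^H\operatorname{blkdiag}_m\{e^{-j2\pi mq/M}\mathbf{B}_m\}\bar{\mathbf{s}}$ by the stacking in \eqref{eq:fst_stacked_symbols}. Next, $e^{-j2\pi mq/M}$ is exactly the $m$th diagonal entry of $\mathbf{D}_{M,q}^*$ from \eqref{eq:fst_doppler_kron}. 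Hence $\mathbf{D}_{M,q}^*\otimes\boldsymbol{\Psi}_{k,0}=\operatorname{blkdiag}_m\{e^{-j2\pi mq/M}\boldsymbol{\Psi}_{k,0}\}$.

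Third, factor out the block-diagonal projections. Since $\bar{\mathbf{G}}_T(\theta)=\operatorname{blkdiag}_m\{\mathbf{G}_{T,m}(\theta)\}$, block-diagonal multiplication gives
\begin{equation*}
\bar{\mathbf{G}}_T^H(\theta)\bigl(\mathbf{D}_{M,q}^*\otimes\boldsymbol{\Psi}_{k,0}\bigr)\bar{\mathbf{G}}_T(\vartheta)=\operatorname{blkdiag}_m\{e^{-j2\pi mq/M}\mathbf{B}_m\}.
\end{equation*}
Multiplying by the scalar $\rho_r(\theta,\vartheta)$ from \eqref{eq:fst_af_def} yields \eqref{eq:fst_matrix}.

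I expect no step to be a genuine obstacle. The only points needing care are bookkeeping ones: the transpose versus conjugate when moving $\mathbf{a}_t^H(\theta)\mathbf{X}_m^*$ to quadratic-form notation, the sign convention of $\mathbf{D}_{M,q}^*$ matching $e^{-j2\pi mq/M}$, and the dimension compatibility of the Kronecker product with the $MN\times MNN_s$ matrix $\bar{\mathbf{G}}_T$. I would check each of these explicitly.
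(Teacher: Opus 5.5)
Your proposal is correct and follows the same route as the paper's proof in Appendix~\ref{app:multi_fst}. You substitute $\mathbf{X}_m^T\mathbf{a}_t(\cdot)=\mathbf{F}_N^H\mathbf{G}_{T,m}(\cdot)\mathbf{s}_m$ into each slow-time summand, identify $\boldsymbol{\Psi}_{k,0}$, and assemble the $M$ blocks through $\mathbf{D}_{M,q}^{*}\otimes\boldsymbol{\Psi}_{k,0}$ and the block-diagonal $\bar{\mathbf{G}}_T$; you just spell out the bookkeeping (conjugation step, diagonal-Kronecker-equals-block-diagonal identity, dimensions) that the paper compresses into one line.
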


\begin{theorem}[Expected Squared FST 3D-AF]\label{thm:fst_expected}
Under the i.i.d.\ symbol model, applying the quadratic-form moment identity used to derive
\eqref{eq:expected_squared_af} to \eqref{eq:fst_quadratic} yields the expected squared FST 3D-AF:
\begin{equation}
 \begin{aligned}
 \mathbb{E}\!\left[|\mathcal{A}_{3D}^{\rm FST}|^2\right]
 &=\left|\tr(\mathbf{M}^{\rm FST})\right|^2
   +\|\mathbf{M}^{\rm FST}\|_F^2\\[-1pt]
 &\quad+(\kappa-2)\sum_{i=1}^{MNN_s}
   |[\mathbf{M}^{\rm FST}]_{i,i}|^2.
 \end{aligned}
 \label{eq:fst_expected}
\end{equation}
\end{theorem}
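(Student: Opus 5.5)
The plan is to reduce Theorem~\ref{thm:fst_expected} to the single-symbol argument of Theorem~\ref{thm:expected_squared_af}, replacing $\mathbf{s}$ with $\bar{\mathbf{s}}$ and $\mathbf{M}$ with $\mathbf{M}^{\rm FST}$.

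The first step is to confirm the quadratic representation \eqref{eq:fst_quadratic} with the matrix of Proposition~\ref{prop:fst_matrix}. Inserting \eqref{eq:fst_x_theta} into \eqref{eq:fst_af_def} turns the $m$th summand into
\begin{equation*}
\rho_r(\theta,\vartheta)\,e^{-j2\pi mq/M}\,\mathbf{s}_m^H\mathbf{G}_{T,m}^H(\theta)\boldsymbol{\Psi}_{k,0}\mathbf{G}_{T,m}(\vartheta)\mathbf{s}_m .
\end{equation*}
This step uses $(\mathbf{F}_N^H\mathbf{G}_{T,m}\mathbf{s}_m)^H=\mathbf{s}_m^H\mathbf{G}_{T,m}^H\mathbf{F}_N$ together with \eqref{eq:fst_psi}. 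Summing over $m$ gives a block-diagonal quadratic form in $\bar{\mathbf{s}}$ whose $m$th diagonal block is $e^{-j2\pi mq/M}\mathbf{G}_{T,m}^H(\theta)\boldsymbol{\Psi}_{k,0}\mathbf{G}_{T,m}(\vartheta)$. I then recognize $\mathbf{D}_{M,q}^*\otimes\boldsymbol{\Psi}_{k,0}=\operatorname{blkdiag}_m(e^{-j2\pi mq/M}\boldsymbol{\Psi}_{k,0})$. Since $\bar{\mathbf{G}}_T$ is block diagonal, the sandwich $\bar{\mathbf{G}}_T^H(\theta)\bigl(\mathbf{D}_{M,q}^*\otimes\boldsymbol{\Psi}_{k,0}\bigr)\bar{\mathbf{G}}_T(\vartheta)$ reproduces exactly these blocks. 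This establishes \eqref{eq:fst_matrix}.

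The second step is the moment computation. The entries of $\bar{\mathbf{s}}$ are assumed i.i.d., proper, zero-mean, with unit power and fourth moment $\kappa$. Therefore Assumption~\ref{ass:symbol_statistics} holds for $\bar{\mathbf{s}}$ with the index range enlarged to $MNN_s$. Lemma~\ref{lem:quadratic_covariance} uses only these entrywise moments and never the dimension or any structure of $\mathbf{P},\mathbf{Q}$, so it applies verbatim with $\mathbf{P}=\mathbf{Q}=\mathbf{M}^{\rm FST}$. This yields
\begin{equation*}
\operatorname{Cov}\bigl(\bar{\mathbf{s}}^H\mathbf{M}^{\rm FST}\bar{\mathbf{s}},\,\bar{\mathbf{s}}^H\mathbf{M}^{\rm FST}\bar{\mathbf{s}}\bigr)=\|\mathbf{M}^{\rm FST}\|_F^2+(\kappa-2)\sum_{i=1}^{MNN_s}\bigl|[\mathbf{M}^{\rm FST}]_{i,i}\bigr|^2 .
\end{equation*}
The mean identity gives $\mathbb{E}[\bar{\mathbf{s}}^H\mathbf{M}^{\rm FST}\bar{\mathbf{s}}]=\operatorname{Tr}(\mathbf{M}^{\rm FST})$, since $\mathbb{E}[\bar{\mathbf{s}}\bar{\mathbf{s}}^H]=\mathbf{I}$. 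Finally, $\mathbb{E}|X|^2=\operatorname{Cov}(X,X)+|\mathbb{E}X|^2$, which yields \eqref{eq:fst_expected}.

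The only real obstacle is the first step, namely keeping the transposes and conjugates consistent. In particular, $\mathbf{X}_m^*=(\mathbf{X}_m^T)^*$ must be paired with $\mathbf{a}_t^H(\theta)$ to form $(\mathbf{X}_m^T\mathbf{a}_t(\theta))^H$, and the Doppler phase must be attached to the correct diagonal block so that it matches $\mathbf{D}_{M,q}^*$ rather than $\mathbf{D}_{M,q}$. Once \eqref{eq:fst_quadratic} is in hand, the statistical part is a direct reuse of Lemma~\ref{lem:quadratic_covariance}. As a sanity remark, the block-diagonal structure makes the coherent term factor as $\sum_m e^{-j2\pi mq/M}\operatorname{Tr}(\cdot)$, though the theorem does not need this.
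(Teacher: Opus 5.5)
Your proposal is correct and follows the paper's own route. Appendix~\ref{app:multi_fst} substitutes $\mathbf{X}_m^T\mathbf{a}_t(\theta)=\mathbf{F}_N^H\mathbf{G}_{T,m}(\theta)\mathbf{s}_m$ to get the block-diagonal quadratic form with $\bar{\mathbf{G}}_T^H(\theta)\bigl(\mathbf{D}_{M,q}^{*}\otimes\boldsymbol{\Psi}_{k,0}\bigr)\bar{\mathbf{G}}_T(\vartheta)$, then applies the fourth-order moment identity of Appendix~\ref{app:covariance} with $\mathbf{P}=\mathbf{Q}=\mathbf{M}^{\rm FST}$ and adds the squared mean, exactly as you do; the only cosmetic slip is that your description of the $m$th diagonal block drops the scalar $\rho_r(\theta,\vartheta)$, which does not affect the argument.
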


\begin{corollary}[Deterministic FST EISL and ESL]\label{cor:fst_eisl}
For a target at angle $\theta_0$, setting $\vartheta=\theta_0$ in
\eqref{eq:fst_expected} and integrating over the FST sidelobe region gives the EISL:
\begin{equation}
 \begin{aligned}
 \EISL_{3D}^{\rm FST}(\theta_0)
 &=\int_{\Omega_{{\rm FST},\rm sl}(\theta_0)}\biggl[
 |\tr(\mathbf{M}^{\rm FST})|^2+\|\mathbf{M}^{\rm FST}\|_F^2\\[-1pt]
 &\quad +(\kappa-2)\sum_{i=1}^{MNN_s}|[\mathbf{M}^{\rm FST}]_{i,i}|^2
 \biggr]\mathrm{d}\zeta_{\rm FST}.
 \end{aligned}
 \label{eq:fst_eisl_closed}
\end{equation}
The ESL is obtained by dividing \eqref{eq:fst_eisl_closed} by
$\zeta_{\rm FST}(\Omega_{{\rm FST},\rm sl}(\theta_0))$.
\end{corollary}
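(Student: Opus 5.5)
The statement is Corollary~\ref{cor:fst_eisl}. I would prove it by the same route as Corollary~\ref{cor:eisl_esl}, transplanted to the stacked FST model.

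\textbf{Step 1: the pointwise second moment.} Fix a test cell $(k,q,\theta)\in\Omega_{\rm FST}$ and set $\vartheta=\theta_0$. By \eqref{eq:fst_quadratic} and Proposition~\ref{prop:fst_matrix}, the FST cross 3D-AF is the quadratic form $\bar{\mathbf{s}}^H\mathbf{M}^{\rm FST}\bar{\mathbf{s}}$. Here the matrix $\mathbf{M}^{\rm FST}=\bar{\mathbf{M}}_{k,q}^{\rm FST}(\theta,\theta_0)$ is deterministic.

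\textbf{Step 2: the symbol assumptions.} The entries of $\bar{\mathbf{s}}$ satisfy Assumption~\ref{ass:symbol_statistics} in dimension $MNN_s$. Lemma~\ref{lem:quadratic_covariance} does not depend on the vector length, so it applies verbatim with $\mathbf{P}=\mathbf{Q}=\mathbf{M}^{\rm FST}$. Adding $|\mathbb{E}[\bar{\mathbf{s}}^H\mathbf{M}^{\rm FST}\bar{\mathbf{s}}]|^2=|\tr(\mathbf{M}^{\rm FST})|^2$ gives Theorem~\ref{thm:fst_expected} at that point:
\[
\mathbb{E}\bigl[|\mathcal{A}_{3D}^{\rm FST}|^2\bigr]=|\tr(\mathbf{M}^{\rm FST})|^2+\|\mathbf{M}^{\rm FST}\|_F^2+(\kappa-2)\sum_i|[\mathbf{M}^{\rm FST}]_{i,i}|^2 .
\]

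\textbf{Step 3: integrate over the sidelobe region.} By Definition~\ref{def:fst_sidelobe_metrics}, $\EISL_{3D}^{\rm FST}(\theta_0)$ is the expectation of an integral over $\Omega_{{\rm FST},\rm sl}(\theta_0)$ with respect to $\zeta_{\rm FST}$. The only step needing care is interchanging expectation and integral. The measure $\zeta_{\rm FST}$ is a finite sum over $(k,q)\in\mathcal{K}\times\mathbb{Z}_M$ of Lebesgue integrals on $\Theta$. The integrand $|\mathcal{A}_{3D}^{\rm FST}|^2$ is nonnegative and jointly measurable in $(\bar{\mathbf{s}},\theta)$: it is a polynomial in $\bar{\mathbf{s}}$ and continuous in $\theta$ through $\mathbf{a}_t$ and $\mathbf{a}_r$. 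Tonelli's theorem therefore justifies the swap without integrability hypotheses.

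\textbf{Step 4: finiteness.} For finiteness I would note that $\mathbb{E}|s_i|^4=\kappa<\infty$ is implicit in Definition~\ref{def:constellation_kurtosis}. The entries of $\mathbf{M}^{\rm FST}$ are bounded uniformly in $\theta$: $|\rho_r|\le1$, $\|\mathbf{a}_t\|=1$, the precoders are fixed, and $\mathbf{D}_{M,q}^*\otimes\boldsymbol{\Psi}_{k,0}$ is unitary. So each integrand is bounded. Any $\Theta$ of finite length then gives a finite EISL, and in general the identity holds in $[0,\infty]$.

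\textbf{Step 5: substitute and conclude.} Substituting the pointwise formula into the swapped integral gives \eqref{eq:fst_eisl_closed}. Since $\Omega_{{\rm FST},\rm sl}(\theta_0)$ already excludes the mainlobe, the indicator $\chi_{\rm sl}$ is absorbed into the domain of integration. The ESL statement then follows directly from the definition in \eqref{eq:fst_metrics}, under the stated condition $0<\zeta_{\rm FST}(\Omega_{{\rm FST},\rm sl}(\theta_0))<\infty$.

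\textbf{Main obstacle.} I expect the main, if mild, obstacle to be the measure-theoretic interchange in Step~3 and the finiteness in Step~4. Everything else is a direct invocation of Lemma~\ref{lem:quadratic_covariance} and Theorem~\ref{thm:fst_expected}.
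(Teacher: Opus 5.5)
Your proposal is correct and follows essentially the same route as the paper: apply the quadratic-form moment identity of Lemma~\ref{lem:quadratic_covariance} to $\bar{\mathbf{s}}^H\mathbf{M}^{\rm FST}\bar{\mathbf{s}}$ to get the pointwise second moment (Theorem~\ref{thm:fst_expected}), use nonnegativity and Tonelli's theorem to interchange expectation and integration over $\Omega_{{\rm FST},\rm sl}(\theta_0)$, and divide by $\zeta_{\rm FST}(\Omega_{{\rm FST},\rm sl}(\theta_0))$ for the ESL. Your joint-measurability and uniform-boundedness arguments in Steps~3--4 are valid refinements that the paper leaves implicit.
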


\begin{corollary}[Kronecker Structure With Identical Precoders]\label{cor:fst_kron}
If $\mathbf{W}_{m,n}=\mathbf{W}_n$ for all $m$ and $n$, then
\begin{equation}
 \mathbf{M}^{\rm FST}
 =
 \mathbf{D}_{M,q}^{*}\otimes\mathbf{M}_{k,0}(\theta,\vartheta).
 \label{eq:fst_kron}
\end{equation}
Consequently, the FST 3D-AF has the equivalent form
\begin{equation}
 \mathcal{A}_{3D}^{\rm FST}(k,q;\theta,\vartheta)
 =
 \bar{\mathbf{s}}^{H}\mathbf{M}^{\rm FST}\bar{\mathbf{s}}
 =
 \bar{\mathbf{s}}^{H}
 \left[\mathbf{D}_{M,q}^{*}\otimes\mathbf{M}_{k,0}(\theta,\vartheta)\right]
 \bar{\mathbf{s}}.
 \label{eq:fst_af_kron}
\end{equation}
\end{corollary}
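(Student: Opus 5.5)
Starting from Proposition~\ref{prop:fst_matrix}, I will substitute $\mathbf{W}_{m,n}=\mathbf{W}_n$ and factor $\bar{\mathbf{G}}_T(\theta)$ as a Kronecker product. With identical precoders, $\mathbf{G}_{T,m}(\theta)=\operatorname{blkdiag}_n\{\mathbf{a}_t^T(\theta)\mathbf{W}_n\}=\mathbf{G}_T(\theta)$ for every $m$, with $\mathbf{G}_T(\theta)$ the single-symbol matrix in \eqref{eq:GT_def}. Hence
\begin{equation*}
 \bar{\mathbf{G}}_T(\theta)=\operatorname{blkdiag}\!\left(\mathbf{G}_T(\theta),\ldots,\mathbf{G}_T(\theta)\right)=\mathbf{I}_M\otimes\mathbf{G}_T(\theta),
\end{equation*}
and likewise $\bar{\mathbf{G}}_T^H(\theta)=\mathbf{I}_M\otimes\mathbf{G}_T^H(\theta)$, using $(\mathbf{A}\otimes\mathbf{B})^H=\mathbf{A}^H\otimes\mathbf{B}^H$.

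Next I will insert these into \eqref{eq:fst_matrix} and apply the mixed-product rule $(\mathbf{A}\otimes\mathbf{B})(\mathbf{C}\otimes\mathbf{D})=\mathbf{A}\mathbf{C}\otimes\mathbf{B}\mathbf{D}$ twice:
\begin{equation*}
 (\mathbf{I}_M\otimes\mathbf{G}_T^H(\theta))(\mathbf{D}_{M,q}^*\otimes\boldsymbol{\Psi}_{k,0})(\mathbf{I}_M\otimes\mathbf{G}_T(\vartheta))
 =\mathbf{D}_{M,q}^*\otimes\mathbf{G}_T^H(\theta)\boldsymbol{\Psi}_{k,0}\mathbf{G}_T(\vartheta).
\end{equation*}
The dimensions are compatible: $M\times M$ with $NN_s\times N$, then $N\times N$, then $N\times NN_s$. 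I will absorb the scalar $\rho_r(\theta,\vartheta)$ into the second factor.

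I then need $\boldsymbol{\Psi}_{k,0}$ in \eqref{eq:fst_psi} to coincide with the single-symbol kernel \eqref{eq:Psi_def} at $q=0$. This holds because $\mathbf{D}_{N,0}=\mathbf{I}_N$. By Definition~\ref{def:af_matrix}, the second factor is therefore exactly $\mathbf{M}_{k,0}(\theta,\vartheta)$, which yields \eqref{eq:fst_kron}. Substituting this into the quadratic form \eqref{eq:fst_quadratic} gives \eqref{eq:fst_af_kron} immediately.

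The only point needing care is confirming that the ordering of $\bar{\mathbf{s}}$ (block index $m$ outermost) matches the Kronecker convention, with the slow-time factor on the left. That is how \eqref{eq:fst_stacked_symbols} and Proposition~\ref{prop:fst_matrix} are arranged, so nothing else is an obstacle.
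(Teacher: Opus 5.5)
Your proof is correct and follows the same route as the paper. The paper simply observes that, with identical precoders, the per-block kernel in Proposition~\ref{prop:fst_matrix} no longer depends on $m$. Your factorization $\bar{\mathbf{G}}_T(\theta)=\mathbf{I}_M\otimes\mathbf{G}_T(\theta)$, combined with the mixed-product rule and the check $\mathbf{D}_{N,0}=\mathbf{I}_N$, is the explicit version of that step.
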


\section{Numerical Results and Discussion}\label{sec:numerical_results}

In this section, numerical results are presented to validate the preceding theoretical derivations
and approximations. Unless otherwise stated, we use $N=32$ subcarriers, $N_t=N_r=8$
half-wavelength-spaced ULA elements, one data stream per subcarrier, a boresight target at
$\theta_0=0^\circ$, and unit-power 16-QAM symbols with fourth moment $\kappa=1.32$. The default
precoder is $\mathbf{W}_n=\mathbf{a}_t^*(\theta_0)$ on every subcarrier. For all multi-symbol/FST
numerical experiments, we use $M=32$ OFDM blocks. The delay and Doppler sets
contain all $N$ cyclic bins displayed through centered representatives, and the numerical angle
samples cover $[-60^\circ,60^\circ]$ in $2^\circ$ increments to approximate the continuous angular
integral. The angular mainlobe is defined by the first nulls of the receive ULA, corresponding to approximately \([-14.48^\circ,14.48^\circ]\) for the boresight target. Monte Carlo results use $R=5000$ independent payload realizations.
Theoretical curves are evaluated from the derived quadratic-form expectation, while numerical
curves are obtained by averaging over the same grids and normalization.

For each discrete delay-Doppler cell and sampled angle $\omega=(k,q,\theta)$, let
$\mathcal{A}_{3D}^{(r)}(\omega)$ denote the AF generated by the $r$th independent payload
realization. With $R$ trials, the Monte Carlo estimate
of the expected squared AF is
\begin{equation}
  \widehat P_R(\omega)
  :=
  \frac{1}{R}
  \sum_{r=1}^{R}
  \left|\mathcal{A}_{3D}^{(r)}(k,q;\theta,\theta_0)\right|^2.
  \label{eq:mc_power_estimator}
\end{equation}
This estimator is compared with the closed-form expectation in \eqref{eq:expected_squared_af}. Each
trial regenerates the random payload while keeping the precoders, target angle, grids, and
normalization fixed. Theory and simulation use the same mainlobe exclusion rule. As specified
below, the reported sampled EISL uses unit counting weights on the common angle samples; radian
quadrature weights may instead be used to approximate the continuous-angle metric in
\eqref{eq:EISL_def}.

\subsection{Joint Delay-Doppler-Angle Structure}

We first examine the single-symbol point-target AF to identify how delay, Doppler, and
angle mismatches shape its expected response. This geometric characterization provides
the reference for the constellation comparisons below. To compare the conditional slices,
we normalize the expected squared AF by its value at the fully matched cell:
\begin{equation}
  \begin{aligned}
    P_{\rm th}(k,q,\theta)
     & =
    \frac{\mathbb{E}[|\mathcal{A}_{3D}(k,q;\theta,\theta_0)|^2]}
    {\mathbb{E}[|\mathcal{A}_{3D}(0,0;\theta_0,\theta_0)|^2]}.
  \end{aligned}
  \label{eq:numerical_normalized_af}
\end{equation}
The matched cell $(0,0,\theta_0)$ is therefore $0$ dB. The centered delay and Doppler
indices are expressed in physical units as $\tau=kT_s$ and $f_d=q/(NT_s)$, with
$T_s=31.25\,{\rm ns}$.

\begin{figure}[!htbp]
  \centering
  \includegraphics[width=0.98\linewidth]{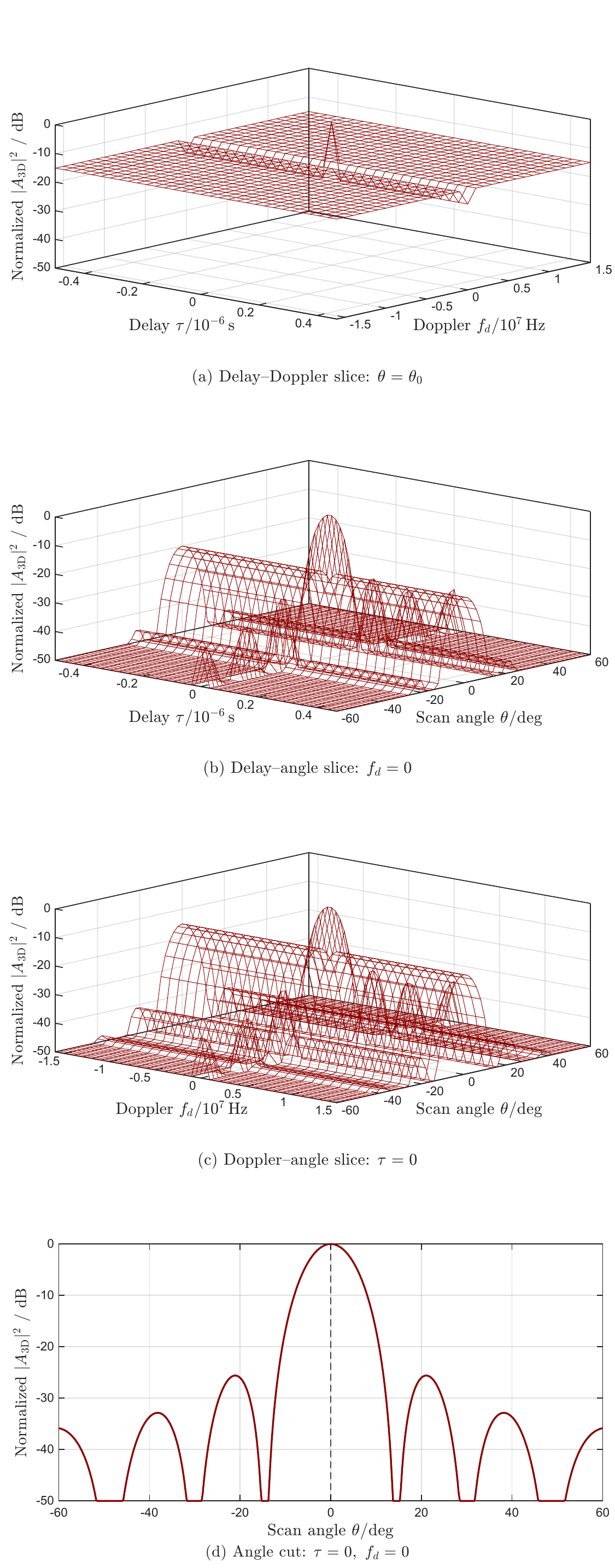}
  \caption{Normalized expected squared 3D-AF slices in decibels: (a) delay-Doppler slice at
    $\theta=\theta_0$, (b) delay-angle slice at $q=0$, (c) Doppler-angle slice at $k=0$, and (d) angular cut at $k=q=0$.}
  \label{fig:joint_3d_af_slices}
\end{figure}

With the test angle matched to the target, the delay-Doppler slice in
Fig.~\ref{fig:joint_3d_af_slices}(a) exhibits a central peak above an expected sidelobe
floor. This floor is governed jointly by the cyclic delay-Doppler kernel and the symbol
statistics in \eqref{eq:expected_squared_af}. The delay-angle and Doppler-angle slices in
panels (b) and (c) combine the respective grid mismatch with receive-array correlation and
the transmit precoder projection. At $k=q=0$, panel (d) isolates the angular response of
the array and precoder. Its continuous variation contrasts with the sampled integer delay
and Doppler axes. Because each panel fixes one or two coordinates, its sidelobes describe
conditional ambiguity and do not by themselves determine the largest response over the
full 3D search region. We therefore examine the aggregate sidelobe energy next, followed
by coordinate cuts that identify where constellation-dependent differences arise.

\subsection{Constellation Effects in Single-Symbol and FST Processing}

We compare unit-power QPSK, 16-QAM, 256-QAM, and proper Gaussian symbols under OFDM
signaling, first for a single symbol and then for coherent FST processing. The precoders,
array configuration, angular samples, and mainlobe boundary are held fixed so that the
symbol distribution is the comparison variable. Let $\Theta^{\rm samp}=\{\theta_i\}$ and
let $w_i$ denote its prescribed angular weights. The sampled sums below describe
aggregation over the 3D search domain; constellation comparisons use common weights
and power scaling within each panel.
For a sampled sidelobe region $\Omega_{\rm sl}^{\rm
    samp}\subseteq\mathcal{K}\times\mathcal{Q}\times\Theta^{\rm samp}$, the sampled full 3D EISL is
\begin{equation}
  \begin{aligned}
    \EISL_{\rm samp}
     & =
    \sum_{(k,q,\theta_i)\in\Omega_{\rm sl}^{\rm samp}}
    w_i
    \mathbb{E}\left[
      |\mathcal{A}_{3D}(k,q;\theta_i,\theta_0)|^2
      \right],
  \end{aligned}
  \label{eq:sampled_eisl}
\end{equation}
where $\Omega_{\rm sl}^{\rm samp}$ excludes only
$\{(0,0,\theta_i):\theta_i\in\Theta_{\rm ml}(\theta_0)\}$, as in
\eqref{eq:sidelobe_region}. Cells with $(k,q)\ne(0,0)$ remain in the sidelobe sum even when
$\theta_i\in\Theta_{\rm ml}(\theta_0)$. We use $w_i=1$ for the sampled sum;
radian quadrature weights would instead approximate the continuous-angle
metric in \eqref{eq:EISL_def}.

For FST processing, the corresponding sampled metric is
\begin{equation}
  \begin{aligned}
    \EISL_{\rm samp}^{\rm FST}
    &=\sum_{(k,q,\theta_i)\in\Omega_{{\rm FST},\rm sl}^{\rm samp}}
      w_i\mathbb{E}\!\left[
      |\mathcal{A}_{3D}^{\rm FST}(k,q;\theta_i,\theta_0)|^2
      \right].
  \end{aligned}
  \label{eq:sampled_fst_eisl}
\end{equation}
Here, $k$ and $q$ are delay and Doppler mismatch coordinates. The FST sum uses all $M$
slow-time Doppler bins and excludes the matched delay-Doppler cell within the same
angular mainlobe, following \eqref{eq:fst_sidelobe_region}. It also uses $w_i=1$.
Theoretical values are obtained from \eqref{eq:expected_squared_af} for the
single-symbol case and \eqref{eq:fst_expected} for the FST case, with the latter
summed as in Corollary~\ref{cor:fst_eisl}. Monte Carlo values average the corresponding
ISL over independent payload realizations. A common positive power-normalization
factor changes the EISL level but preserves the constellation ordering at fixed $N$.

\begin{figure}[!htbp]
  \centering
  \includegraphics[width=0.98\linewidth]{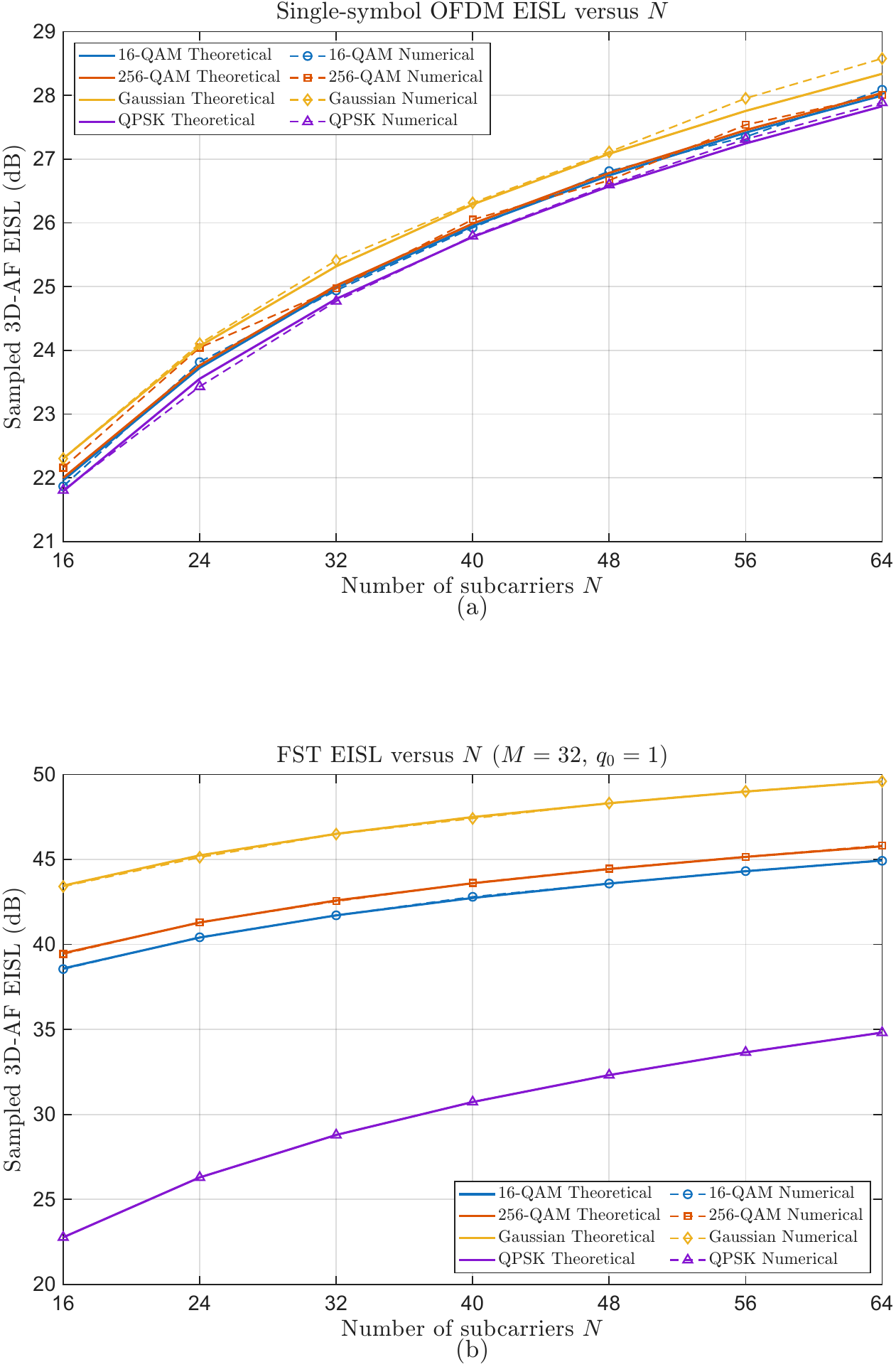}
  \caption{Constellation effects on sampled full 3D-AF EISL versus the number of subcarriers
    $N$, displayed in decibels: (a) single-symbol OFDM and (b) FST processing over
    $M=32$ OFDM blocks with single-target Doppler index $q_0=1$.}
  \label{fig:eisl_vs_subcarriers}
\end{figure}

\begin{figure}[!htbp]
  \centering
  \includegraphics[width=0.98\linewidth]{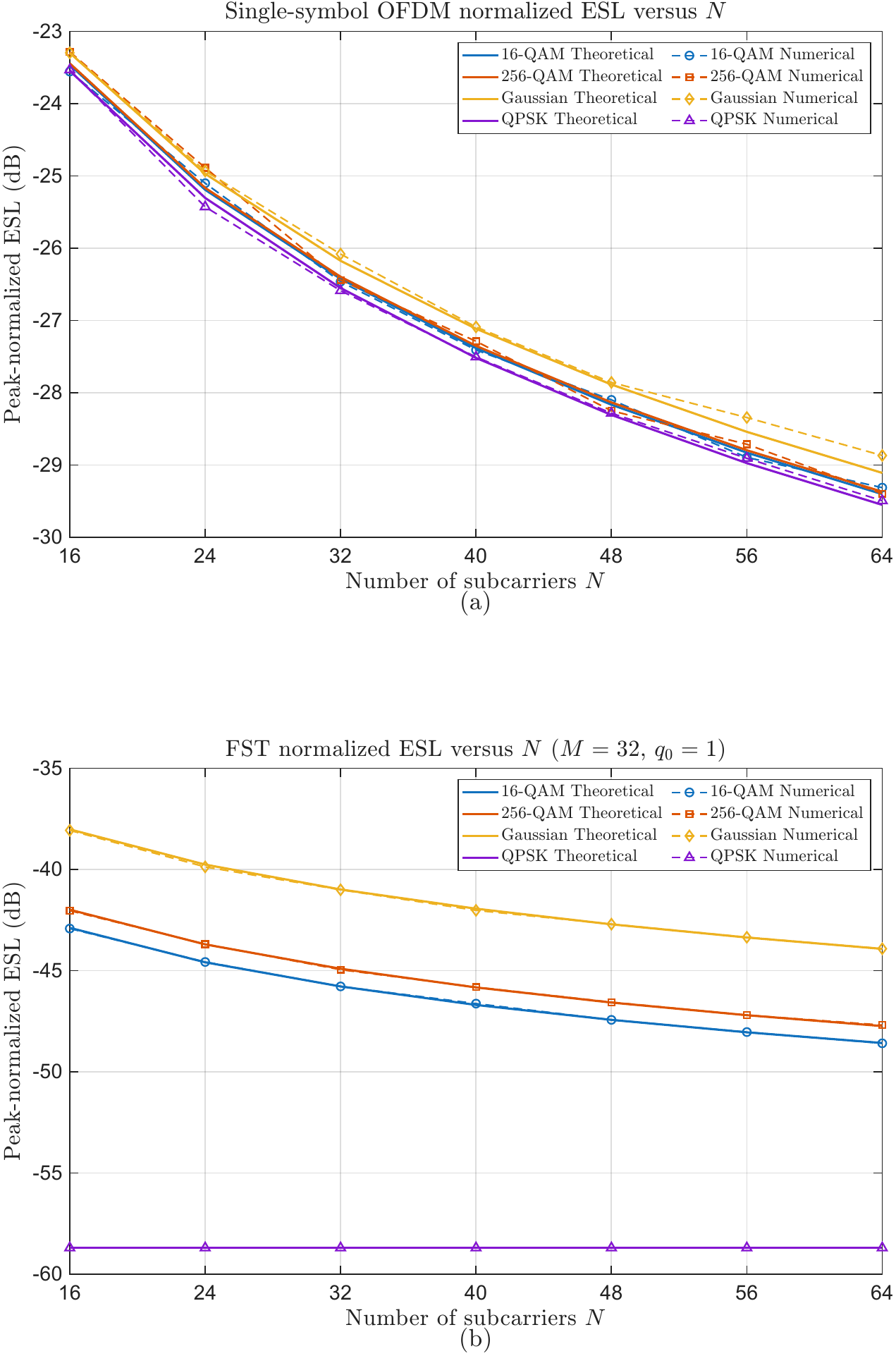}
  \caption{Peak-normalized expected sidelobe level (ESL) versus the number of subcarriers
  $N$, with theoretical curves and Monte Carlo markers: (a) single-symbol OFDM and
  (b) FST processing over $M=32$ OFDM blocks with target Doppler index $q_0=1$.}
  \label{fig:esl_vs_subcarriers}
\end{figure}

Figure~\ref{fig:eisl_vs_subcarriers} varies $N$ from $16$ to $64$ in steps of $8$.
Panel (a) uses the single-symbol delay-Doppler grid with $q\in\mathbb{Z}_N$.
Panel (b) uses $M=32$ OFDM blocks and a single target with Doppler index
$q_0=1$; its retained slow-time bin is $\operatorname{mod}(q_0,M)=1$, and the search
is expressed in relative bins $q\in\mathbb{Z}_M$.

In Fig.~\ref{fig:eisl_vs_subcarriers}(a), the numerical curves follow the overall
theoretical trends, although deviations are visible at some subcarrier
counts. At fixed $N$, the theoretical EISL increases from QPSK through 16-QAM and
256-QAM to proper Gaussian symbols. With unit symbol power and fixed precoders, this
ordering follows the increasing fourth moment $\kappa$ and the nonnegative diagonal
energy multiplying $(\kappa-2)$ in \eqref{eq:expected_squared_af}.

The FST results in Fig.~\ref{fig:eisl_vs_subcarriers}(b) retain the same theoretical
constellation ordering, and the numerical curves closely track the corresponding
predictions. The observed constellation dependence is consistent with the FST
statistical characterization. The fourth-moment dependence persists because
\eqref{eq:fst_expected} has the same statistical decomposition, with
$\mathbf{M}^{\rm FST}$ replacing the single-symbol kernel. The accuracy of the FST
approximation to the serialized Doppler response is examined separately in the next
subsection.

\begin{figure}[!htbp]
  \centering
  \includegraphics[width=1.0\linewidth]{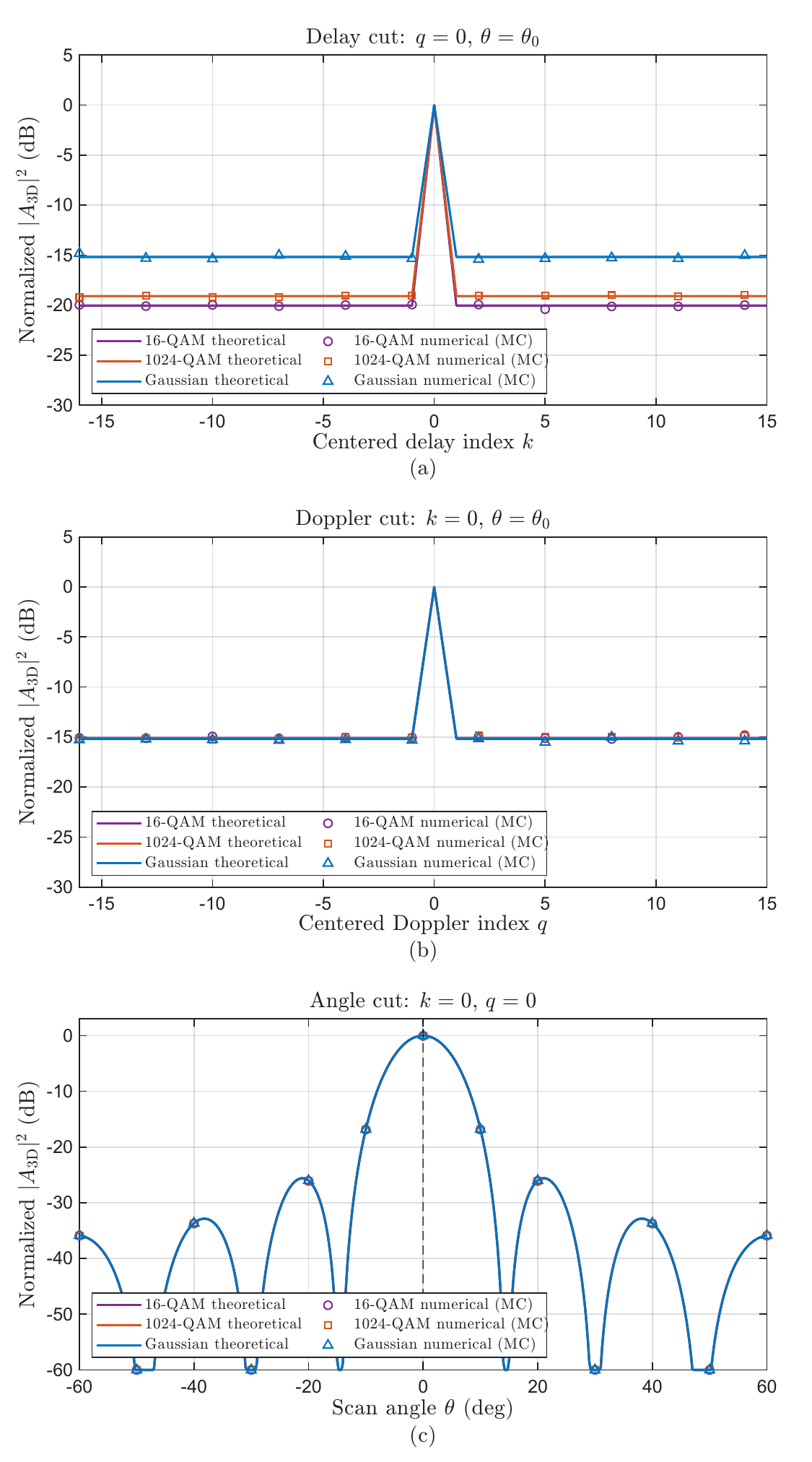}
  \caption{Normalized expected squared OFDM 3D-AF for 16-QAM, 1024-QAM, and proper Gaussian
    symbols, with theoretical curves and Monte Carlo markers: (a) delay cut at $q=0$ and
    $\theta=\theta_0$; (b) Doppler cut at $k=0$ and $\theta=\theta_0$; (c) angle cut at $k=q=0$
    for the same constellations and normalization.}
  \label{fig:constellation_af_cuts}
\end{figure}

Both panels show increasing aggregate EISL with $N$. Their absolute levels should be
interpreted with their respective processing intervals and Doppler grids: the
single-symbol metric sums $N$ Doppler bins, whereas the FST metric sums $M=32$
slow-time bins and uses coherent processing over 32 blocks. We therefore compare
constellations within each panel; interpreting the absolute level difference as a
sensing gain would additionally require a common energy and grid normalization.

Figure~\ref{fig:esl_vs_subcarriers} reports the corresponding peak-normalized ESL,
which divides the average sidelobe power by the expected matched-cell power within
the same processing model. This normalization removes the dominant change in total
mainlobe energy as $N$ varies and gives an average sidelobe-level comparison. In both
the single-symbol and FST panels, the theoretical curves and Monte Carlo markers are
closely aligned. The constant-modulus QPSK curve is generally the lowest, the proper
Gaussian curve is the highest, and the QAM curves lie between them, consistent with
the fourth-moment dependence. The single-symbol and FST ESL values are compared only
within their respective panels because their search grids and coherent processing
intervals differ.

To locate the constellation-dependent differences within the ambiguity response,
Fig.~\ref{fig:constellation_af_cuts} returns to the single-symbol case and shows
normalized delay, Doppler, and angle cuts for 16-QAM, 1024-QAM, and proper Gaussian
symbols. These cuts use the matched-cell normalization in
\eqref{eq:numerical_normalized_af} and resolve the response at individual cells,
complementing the aggregate EISL comparison in Fig.~\ref{fig:eisl_vs_subcarriers}.
The 1024-QAM case provides an additional higher-order constellation comparison for the local response.

Theory and Monte Carlo agree across the three displayed cuts. Along delay, the normalized
sidelobes increase from 16-QAM to 1024-QAM and then to proper Gaussian symbols with
$\kappa=2$, in line with the fourth-order correction in \eqref{eq:expected_squared_af}.
By comparison, the Doppler and angle cuts nearly overlap at the plotted scale. Under the
present precoder and normalization, the angular shape is primarily set by the
array-precoder response; the observed overlap does not imply constellation-independent
sidelobes for other configurations. Together, the EISL and cut comparisons distinguish
aggregate constellation effects from their distribution across the delay, Doppler, and
angle coordinates.

\subsection{FST Approximation Accuracy}

We assess the FST approximation using a single zero-delay boresight target with unit-power
16-QAM symbols, a CP length of $N_{\rm CP}=6$, and 100 Monte Carlo trials. All remaining
system parameters follow the defaults stated at the beginning of
Section~\ref{sec:numerical_results}. In each trial, the serialized
and FST AFs use the same payload. We vary the target Doppler index to examine how the
approximation error changes with $\eta$ defined in \eqref{eq:fst_small_phase}.

The CP length is selected in a separate illustrative simulation with an assumed four-tap
Rayleigh channel having delays $[0,2,4,6]$ samples and average powers $[0,-3,-6,-9]$ dB.
At the adopted sampling interval, its maximum excess delay is $6T_s=187.5$ ns.
Across 100 random channel and OFDM-block realizations, $N_{\rm CP}=6$ is the shortest
of $0{:}6$ meeting a worst-case linear-versus-circular-convolution relative-error
threshold of $10^{-10}$. This channel is used only to select the CP; it is not applied
to the AF comparison, which uses the CP-removed cyclic model. The CP affects the AF
comparison only through the block duration in the serialized Doppler phase and in $\eta$.

Since the AF experiment contains only one target, we omit the target index $\ell$ and
write $q_0$ for its integer Doppler index on the $M(N+N_{\rm CP})$-sample CPI grid.
Its retained slow-time Doppler bin is $q_{\rm t}:=\operatorname{mod}(q_0,M)$,
whereas $q$ denotes the test slow-time bin. With target delay $k_0=0$, the plotted
centered integer mismatches are
\begin{equation}
 \Delta k := (k-k_0)_{\!N},\qquad
 \Delta q := (q-q_{\rm t})_{\!M},
  \label{eq:fst_centered_mismatch}
\end{equation}
where the subscripts denote centered representatives modulo $N$ and $M$. Thus, $\Delta q$
is a slow-time bin mismatch, whereas $q_0$ determines the discarded within-block
phase. The matched cell is $(\Delta k,\Delta q)=(0,0)$.

With the test and target angles both fixed at $\theta_0$, denote the two responses by
$\mathcal{A}_{3D}^{\rm ser}$ and $\mathcal{A}_{3D}^{\rm FST}$, respectively. The pointwise
relative error is
\begin{equation}
 \varepsilon_{\rm AF}(\Delta k,\Delta q;q_0)
 :=10\log_{10}
 \frac{\mathbb{E}\!\left[
   \left|\mathcal{A}_{3D}^{\rm ser}
       -\mathcal{A}_{3D}^{\rm FST}\right|^2\right]}
 {\mathbb{E}\!\left[\left|\mathcal{A}_{3D}^{\rm ser}\right|^2\right]}.
 \label{eq:fst_af_approx_error}
\end{equation}
The numerator is the mean-square difference between the two AFs evaluated with the same
payload, rather than the difference between their expected powers.

For $q_0=1$, $4$, and $16$, the parameter $\eta$ defined in
\eqref{eq:fst_small_phase} is approximately $0.165$, $0.661$, and $2.646$,
respectively. Under otherwise identical settings, the relative AF discrepancy in
Fig.~\ref{fig:fst_approximation_error} increases overall with $\eta$, consistent
with the growing within-block Doppler phase discarded by FST\@.
These noiseless, integer-grid results characterize pointwise AF approximation error
and complement the statistical EISL comparison in
Fig.~\ref{fig:eisl_vs_subcarriers}(b): the latter assesses constellation-dependent
statistics within the FST model, while the present experiment evaluates the error
introduced by the FST Doppler approximation.

\begin{figure}[!htbp]
  \centering
  \includegraphics[width=0.99\linewidth]{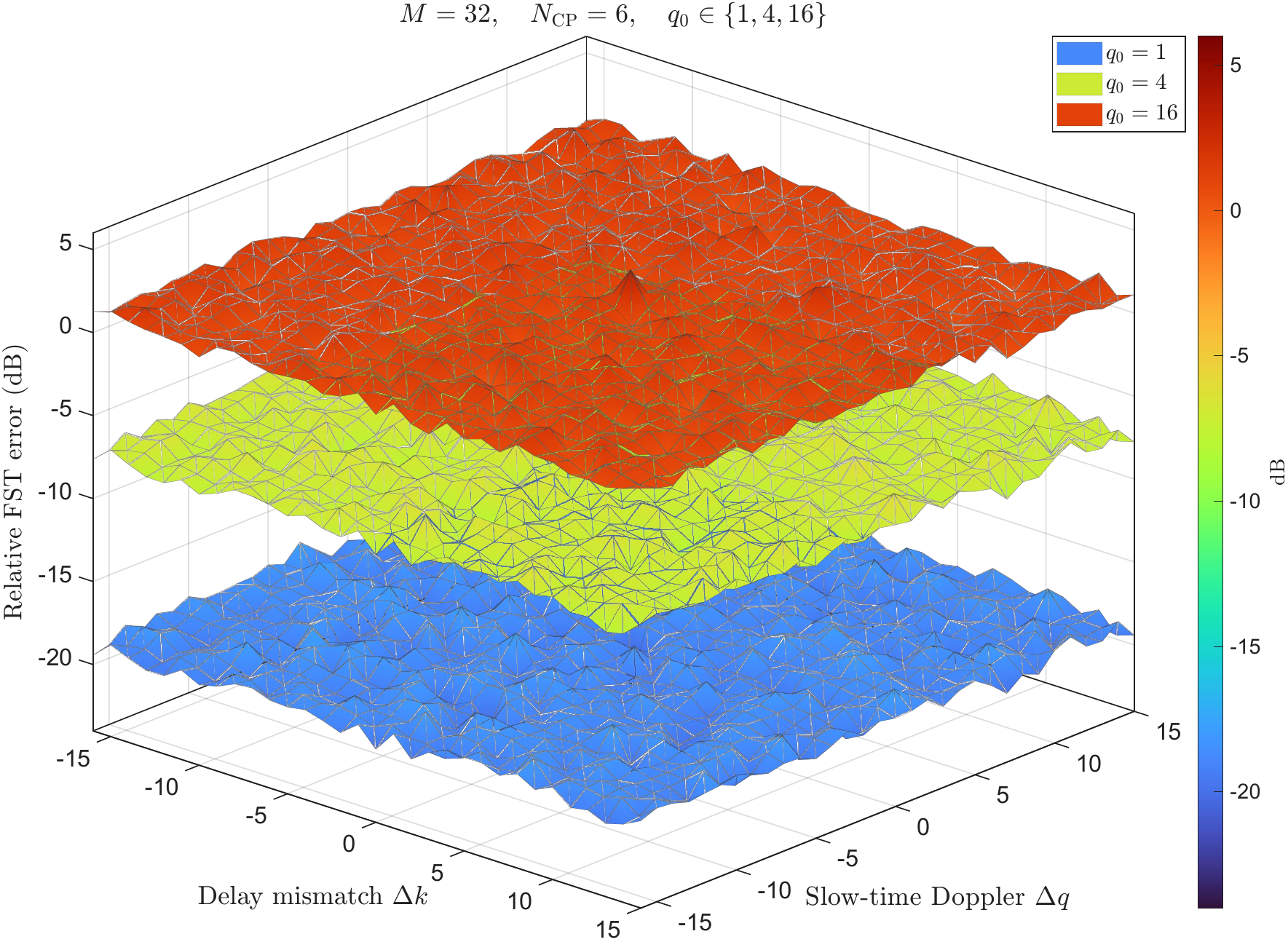}
  \caption{Pointwise relative FST approximation error
    $\varepsilon_{\rm AF}(\Delta k,\Delta q;q_0)$ in \eqref{eq:fst_af_approx_error} over the
    centered integer delay and slow-time Doppler mismatches.}
  \label{fig:fst_approximation_error}
\end{figure}

\subsection{Design Implications and Modeling Scope}

\subsubsection{Interpretation of EISL and Sidelobe Statistics}

Remark~\ref{rem:statistical_decomposition} separates the expected sidelobe power into a
coherent term, total kernel energy, and a fourth-order correction. EISL integrates these terms over
the prescribed 3D sidelobe region, while ESL divides by its measure. Both are ensemble
metrics: neither bounds the largest sidelobe of an individual payload realization. Such a
guarantee would require a separate analysis of peak statistics or exceedance probabilities.

Because the angular response is not translation invariant
(Remark~\ref{rem:angle_nontranslation}), EISL evaluated for a boresight target need not
describe targets elsewhere in the field of view. A sector-wide design can instead aggregate
the objective over the desired target angles, for example by an average or worst-case value.

\subsubsection{Role of Precoding and Constellation Shaping}

The subcarrier precoders enter the deterministic kernel
$\mathbf{M}_{k,q}(\theta,\vartheta)$, so EISL provides an objective for precoder design.
For comparisons at a fixed total transmit power, an illustrative formulation is
\begin{equation}
  \begin{aligned}
    \min_{\{\mathbf{W}_n\}}\quad
     & \int_{\Omega}
    \chi_{\rm sl}(k,q,\theta)
    \mathbb{E}
    \left[
      |\mathcal{A}_{3D}(k,q;\theta,\theta_0)|^2
      \right]\,\mathrm{d}\zeta
    \\
    \operatorname{s.t.}\quad
      & \sum_{n=0}^{N-1}
     \operatorname{Tr}\left(\mathbf{W}_n\mathbf{W}_n^H\right)
     = P.
  \end{aligned}
  \label{eq:eisl_precoder_problem}
\end{equation}
The equality fixes the total transmit power, so EISL comparisons among feasible
precoders are made at the same power level. In a practical ISAC design, this equality
can be combined with communication and hardware constraints such as rate or SINR,
per-antenna power, and spectral masks.
For fixed precoders and i.i.d.\ proper unit-power symbols, the constellation affects this
objective through the $(\kappa-2)$ correction to the diagonal kernel energy
(Remark~\ref{rem:statistical_decomposition}). This relation can guide constellation shaping
when the symbol assumptions remain valid. The precoders, by contrast, alter the full kernel
and couple the angular search cells. Since that kernel contains precoders on both sides of
$\boldsymbol{\Psi}_{k,q}$, the resulting objective is generally nonconvex. A deployment would
also impose its communication and hardware constraints, such as rate or SINR, per-antenna
power, and spectral masks. Solving the resulting joint design problem is beyond the scope
of this statistical characterization.

\subsubsection{Frequency-Dependent Extended Targets}

Appendix~\ref{app:extended_target} extends the single-symbol grid model to a deterministic
frequency-dependent target response and an assumed receiver template. These responses weight
the two sides of $\boldsymbol{\Psi}_{k,q}$, so template mismatch changes the AF kernel.
Conditional on the responses, Lemma~\ref{lem:quadratic_covariance} and
Theorem~\ref{thm:expected_squared_af} apply with the extended-target kernel. A stochastic
target response requires a further expectation over its distribution.

\subsubsection{Scope and Limitations}

The array model assumes narrowband, far-field steering common to all subcarriers; wideband
beam squint would require frequency-dependent steering in $\mathbf{G}_T(\theta)$. The cyclic
delay operator assumes block alignment and a CP long enough for the modeled delay. Delays
beyond the CP would introduce inter-block terms absent from the present kernel.

The moment formula also requires i.i.d.\ proper symbols with the prescribed fourth moment.
Pilots, bit loading, coding, or correlated streams can change the joint symbol moments;
\eqref{eq:expected_squared_af} would then need to be reevaluated under their actual
statistics. Clutter, array imperfections, and angle-dependent scattering are likewise outside
the point-target model. The deterministic extended-target result covers one form of frequency
selectivity, but random scattering requires an additional statistical model. Thus, the
single-symbol results apply to the specified integer-grid cyclic model, while the multi-symbol
FST results additionally rely on the small within-block phase condition in
\eqref{eq:fst_small_phase}.

\section{Conclusions}\label{sec:conclusions}

This paper investigated the ambiguity properties of random MIMO-OFDM communication
waveforms through a unified treatment of the delay-Doppler-angle 3D-AF and its
coherent multi-symbol FST extension. By modeling the point-target matched-filter
output as a quadratic form in i.i.d.\ proper unit-power payload symbols, we derived
closed-form expressions for the expected squared 3D-AF, EISL, and ESL\@. The resulting
deterministic AF matrix incorporates the delay-Doppler shifts, transmit and receive
array responses, and subcarrier-dependent precoders, thereby providing a common
framework for studying waveform statistics and spatial processing.

For the single-symbol 3D-AF, the expected squared response decomposes into a coherent
term, total kernel energy, and a fourth-order correction governed by the constellation
kurtosis. This decomposition shows that precoding changes the full AF matrix, whereas
the constellation affects the response through the diagonal kernel energy. Under the
considered settings, the sampled EISL increases from QPSK to 16-QAM, 256-QAM, and
proper Gaussian symbols, in accordance with their increasing fourth moments. The
peak-normalized ESL decreases with $N$ within the tested models and preserves the same
constellation ordering, while its single-symbol and FST levels must be compared within
their respective processing grids. The AF cuts further show that constellation effects
are most visible along delay for the adopted precoder and normalization, whereas the
Doppler and angle cuts nearly overlap. The angular response nevertheless depends on
the target direction through the array manifold and precoders, and therefore cannot
generally be described only by parameter offsets.

Under the FST formulation, the slow-time Doppler rotation is retained while the
within-block phase variation is neglected under the small-phase condition
$\eta\ll1$. When identical precoders are used across OFDM blocks, the FST AF matrix
has a Kronecker structure that separates the slow-time Doppler operator from the
delay-angle kernel. The corresponding sidelobe statistics retain the fourth-moment
dependence, and the numerical FST EISL and normalized ESL results follow the predicted
constellation ordering. A direct comparison with the serialized AF further shows that
the approximation error increases overall with $\eta$, clarifying the validity range
of the FST model.

Overall, the proposed framework clarifies how payload statistics, precoding, and array
responses jointly determine the sensing ambiguity of random MIMO-OFDM signals under
the stated cyclic-grid assumptions. Its conditional extension also accommodates
deterministic frequency-dependent target responses. Future work may extend the analysis
to communication-constrained fixed-power EISL precoder design, non-i.i.d.\ payloads,
stochastic target responses, and off-grid experimental validation.

\appendices

\section{Quadratic-Form Moments and Sidelobe Metrics}
\label{app:covariance}

Let
\begin{equation}
  u=\mathbf{s}^H\mathbf{P}\mathbf{s},
  \qquad
  v=\mathbf{s}^H\mathbf{Q}\mathbf{s}.
\end{equation}
Since $\mathbb{E}[\mathbf{s}\mathbf{s}^H]=\mathbf{I}$, we have
\begin{equation}
  \mathbb{E}\left[u\right]=\operatorname{Tr}\left(\mathbf{P}\right),
  \qquad
  \mathbb{E}\left[v\right]=\operatorname{Tr}\left(\mathbf{Q}\right).
\end{equation}
Moreover,
\begin{align}
  \mathbb{E}\left[uv^*\right]
   & =
  \sum_{i,j,k,l}
  [\mathbf{P}]_{i,j}[\mathbf{Q}]_{k,l}^*
  \mathbb{E}\left[s_i^*s_j s_k s_l^*\right].
\end{align}
To evaluate the fourth-order term, independence and zero mean restrict nonzero contributions
to paired indices. Properness, $\mathbb{E}[s_i^2]=0$, eliminates pairs of two unconjugated
or two conjugated symbols. The remaining conjugate pairings are
\begin{equation}
  i=j,\ k=l,
  \qquad
  i=k,\ j=l.
\end{equation}
When all four indices coincide, the two pairings contribute $2$ instead of the actual
fourth moment $\kappa$. Adding $(\kappa-2)$ for that case gives
\begin{equation}
  \begin{aligned}
    \mathbb{E}\left[s_i^*s_j s_k s_l^*\right]
     & =
    \delta_{ij}\delta_{kl}+\delta_{ik}\delta_{jl}\\[-1pt]
     & \quad+(\kappa-2)\delta_{ij}\delta_{jk}\delta_{kl}.
  \end{aligned}
\end{equation}
Here $\delta_{ij}$ is the Kronecker delta. Substitution yields
\begin{align}
  \mathbb{E}\left[uv^*\right]
   & =
  \sum_{i,j,k,l}
  [\mathbf{P}]_{i,j}[\mathbf{Q}]_{k,l}^*
  \delta_{ij}\delta_{kl}
  \nonumber \\
   & \quad+
  \sum_{i,j,k,l}
  [\mathbf{P}]_{i,j}[\mathbf{Q}]_{k,l}^*
  \delta_{ik}\delta_{jl}
  \nonumber \\
   & \quad+
  (\kappa-2)
  \sum_{i,j,k,l}
  [\mathbf{P}]_{i,j}[\mathbf{Q}]_{k,l}^*
  \delta_{ij}\delta_{jk}\delta_{kl}
  \nonumber \\
   & =
  \sum_{i,k}[\mathbf{P}]_{i,i}[\mathbf{Q}]_{k,k}^*
  \nonumber \\
   & \quad+
  \sum_{i,j}[\mathbf{P}]_{i,j}[\mathbf{Q}]_{i,j}^*
  \nonumber \\
   & \quad+
  (\kappa-2)\sum_i [\mathbf{P}]_{i,i}[\mathbf{Q}]_{i,i}^*
  \nonumber \\
    & =
   \tr(\mathbf{P})\tr(\mathbf{Q})^*
   +\tr(\mathbf{P}\mathbf{Q}^H)\nonumber\\[-1pt]
    & \quad+(\kappa-2)
   \sum_i [\mathbf{P}]_{i,i}[\mathbf{Q}]_{i,i}^*.
\end{align}
Since
\begin{equation}
  \operatorname{Cov}(u,v)=\mathbb{E}[uv^*]-\mathbb{E}[u]\mathbb{E}[v]^*,
\end{equation}
and
$\mathbb{E}\left[u\right]\mathbb{E}\left[v\right]^*=\operatorname{Tr}\left(\mathbf{P}\right)
\operatorname{Tr}\left(\mathbf{Q}\right)^*$, the product-of-means term cancels, leaving
\begin{equation}
  \begin{aligned}
    \operatorname{Cov}\left(u,v\right)
     & =
    \operatorname{Tr}\left(\mathbf{P}\mathbf{Q}^H\right) \\[-1pt]
     & \quad+(\kappa-2)
    \sum_i [\mathbf{P}]_{i,i}[\mathbf{Q}]_{i,i}^*.
  \end{aligned}
  \label{eq:cov_formula}
\end{equation}
This proves Lemma~\ref{lem:quadratic_covariance}. For the second moment in
Theorem~\ref{thm:expected_squared_af}, set
$\mathbf{P}=\mathbf{Q}=\mathbf{M}$ and add
$|\mathbb{E}[\mathbf{s}^H\mathbf{M}\mathbf{s}]|^2
  =|\operatorname{Tr}(\mathbf{M})|^2$, which gives
\begin{equation}
  \begin{aligned}
    \mathbb{E}\!\left[|\mathbf{s}^H\mathbf{M}\mathbf{s}|^2\right]
     & =
    \left|\operatorname{Tr}(\mathbf{M})\right|^2
    +\operatorname{Tr}(\mathbf{M}\mathbf{M}^H) \\
     & \quad+(\kappa-2)\sum_i|[\mathbf{M}]_{i,i}|^2.
  \end{aligned}
\end{equation}
Substituting $\mathbf{M}=\mathbf{M}_{k,q}(\theta,\vartheta)$ gives
Theorem~\ref{thm:expected_squared_af}. The sidelobe integrand is nonnegative, so
Tonelli's theorem permits expectation and integration to be interchanged:
  \begin{equation}
  \begin{aligned}
    \mathbb{E}\left[\ISL_{3D}(\mathbf{s};\theta_0)\right]
     & =
    \int_{\Omega}\chi_{\rm sl}(k,q,\theta)\\[-1pt]
     & \quad\times
    \mathbb{E}\!\left[
      |\mathcal{A}_{3D}(k,q;\theta,\theta_0)|^2
    \right]\,\mathrm{d}\zeta.
  \end{aligned}
\end{equation}
Substituting the second-moment expression yields \eqref{eq:EISL_closed_form}; division
by $\zeta(\Omega_{\rm sl}(\theta_0))$ gives the ESL in
Corollary~\ref{cor:eisl_esl}.

\section{Multi-Symbol and FST Derivations}
\label{app:multi_fst}

Following the FST construction in~\cite{zhang2025discreteaf}, let $\mathbf{x}_m$ denote
the CP-free time-domain vector of block $m$. With the CP insertion and removal matrices
$\mathbf{C}_{\rm cp}$ and $\mathbf{R}_{\rm cp}$ from Section~\ref{sec:system_model},
the serialized CPI vector is
\begin{equation}
 \widetilde{\mathbf{x}}:=
 \left[(\mathbf{C}_{\rm cp}\mathbf{x}_0)^T,\ldots,
 (\mathbf{C}_{\rm cp}\mathbf{x}_{M-1})^T\right]^T
 \label{eq:app_fst_serial}
\end{equation}
For target $\ell$, the Doppler phase at the sample indexed by $(m,n)$ is
\begin{equation}
 \begin{gathered}
 \resizebox{0.90\columnwidth}{!}{$\displaystyle
 e^{\frac{j2\pi q_{0,\ell}[m(N+N_{\rm CP})+n]}{M(N+N_{\rm CP})}}
 =e^{\frac{j2\pi q_{0,\ell}m}{M}}
 e^{\frac{j2\pi q_{0,\ell}n}{M(N+N_{\rm CP})}}$}\\[-1pt]
 \approx e^{\frac{j2\pi q_{0,\ell}m}{M}}.
 \end{gathered}
 \label{eq:app_fst_phase_factorization}
\end{equation}
The within-block factor can be replaced by a common phase if the parameter
$\eta$ defined in \eqref{eq:fst_small_phase} satisfies
\begin{equation}
  \eta\ll1.
 \label{eq:app_fst_condition}
\end{equation}
For $N_{\rm CP}\le N$, this condition also bounds the phase deviation across the entire
CP-extended block: every sample is at most $N$ positions from the first retained sample
$n=N_{\rm CP}$. Absorbing the phase at that reference into $\beta_\ell$, we approximate
the serialized Doppler operator by
\begin{equation}
 \mathbf{D}_{M(N+N_{\rm CP}),q_{0,\ell}}
 \approx
 \mathbf{D}_{M,q_\ell}\otimes\mathbf{I}_{N+N_{\rm CP}},
 \qquad q_\ell\in\mathbb{Z}_M.
 \label{eq:app_fst_operator}
\end{equation}
Here $q_\ell=\operatorname{mod}(q_{0,\ell},M)$. The approximation retains the slow-time
phase $e^{j2\pi m q_\ell/M}$ while discarding the within-block phase variation.

For an aligned OFDM block and an integer target delay $0\leq k_\ell\leq N_{\rm CP}$,
CP removal gives the cyclic-delay identity
\begin{equation}
 \mathbf{R}_{\rm cp}\,
 \mathbf{J}_{N+N_{\rm CP},k_\ell}\,
 \mathbf{C}_{\rm cp}
 =
 \mathbf{J}_{N,k_\ell}.
 \label{eq:app_cp_identity}
\end{equation}
Combining this identity with the Doppler approximation gives the CP-free received CPI model
\begin{equation}
 \mathbf{y}
 \approx
 \sum_{\ell=1}^{L}\beta_\ell
 \left(\mathbf{D}_{M,q_\ell}\otimes\mathbf{J}_{N,k_\ell}\right)
 \mathbf{x}
 +\mathbf{z},
 \label{eq:app_fst_vector_echo}
\end{equation}
where $\mathbf{x}$ and $\mathbf{y}$ stack the $M$ CP-free transmit and receive blocks.
Equivalently, the $m$th MIMO received block is
\begin{equation}
 \mathbf{Y}_m
 \approx
 \sum_{\ell=1}^{L}\beta_\ell e^{j2\pi m q_\ell/M}
 \mathbf{H}_s(\theta_\ell)\mathbf{X}_m
 \mathbf{J}_{N,k_\ell}^{T}
 +\mathbf{Z}_m.
 \label{eq:app_fst_mimo_echo}
\end{equation}

For a test cell $(k,q,\theta)$, the coherent matched-filter output is
\begin{equation}
 z_{\rm MF}^{\rm FST}(k,q,\theta)
 =
 \sum_{m=0}^{M-1}e^{-j2\pi mq/M}
 \operatorname{Tr}\!\left[
 \mathbf{X}_m^H\mathbf{H}_s^H(\theta)\mathbf{Y}_m\mathbf{J}_{N,k}
 \right].
 \label{eq:app_fst_mf}
\end{equation}
Substituting \eqref{eq:app_fst_mimo_echo} and retaining one target contribution yields
\begin{equation}
 \begin{aligned}
 \mathcal{A}_{3D}^{\rm FST}(k,q;\theta,\vartheta)
 &=\rho_r(\theta,\vartheta)\sum_{m=0}^{M-1}e^{-j2\pi mq/M}\\[-1pt]
 &\quad\times\mathbf{a}_t^H(\theta)\mathbf{X}_m^*\mathbf{J}_{N,k}^{T}
 \mathbf{X}_m^T\mathbf{a}_t(\vartheta),
 \end{aligned}
 \label{eq:app_fst_af}
\end{equation}
This is the FST cross-AF used in Section~\ref{sec:system_model}.

Using $\mathbf{X}_m^T\mathbf{a}_t(\theta)
=\mathbf{F}_N^H\mathbf{G}_{T,m}(\theta)\mathbf{s}_m$, define the delay-only kernel
\begin{equation}
  \boldsymbol{\Psi}_{k,0}:=\mathbf{F}_N\mathbf{J}_{N,k}^{T}\mathbf{F}_N^H.
  \label{eq:app_fst_psi}
\end{equation}
The cross-AF then takes the quadratic form
\begin{equation}
 \begin{aligned}
 \mathcal{A}_{3D}^{\rm FST}(k,q;\theta,\vartheta)
 &=\rho_r(\theta,\vartheta)\bar{\mathbf{s}}^H\bar{\mathbf{G}}_T^H(\theta)\\[-1pt]
 &\quad\times\left(\mathbf{D}_{M,q}^{*}\otimes\boldsymbol{\Psi}_{k,0}\right)
 \bar{\mathbf{G}}_T(\vartheta)\bar{\mathbf{s}}.
 \end{aligned}
 \label{eq:app_fst_quadratic}
\end{equation}
The quadratic form proves Proposition~\ref{prop:fst_matrix}. If the precoders are identical
across CPI blocks, the single-block kernel is independent of $m$, giving
\begin{equation}
 \bar{\mathbf{M}}_{k,q}^{\rm FST}(\theta,\vartheta)
 =
 \mathbf{D}_{M,q}^{*}\otimes\mathbf{M}_{k,0}(\theta,\vartheta),
 \label{eq:app_fst_kron}
\end{equation}
This proves Corollary~\ref{cor:fst_kron}. The fourth-order identity of
Appendix~\ref{app:covariance}, applied to \eqref{eq:app_fst_quadratic}, gives
Theorem~\ref{thm:fst_expected}. Integrating the resulting nonnegative second moment over
$\Omega_{{\rm FST},\rm sl}(\theta_0)$ gives Corollary~\ref{cor:fst_eisl}.

\section{Frequency-Dependent Extended-Target Formulation}
\label{app:extended_target}

Let an extended target have the deterministic subcarrier response
\begin{equation}
  \begin{aligned}
    \mathbf{b}_t
     & :=
    [b_{t,0},b_{t,1},\ldots,b_{t,N-1}]^T
    \in\CN^N, \\
    \mathbf{B}_t
     & :=
    \operatorname{diag}(\mathbf{b}_t),
  \end{aligned}
  \label{eq:extended_target_response}
\end{equation}
with circular time-domain filtering operator
\begin{equation}
  \boldsymbol{\Phi}_t
  :=
  \mathbf{F}_N^H\mathbf{B}_t\mathbf{F}_N.
  \label{eq:extended_target_filter}
\end{equation}
The receiver may use a different deterministic frequency response
$\mathbf{b}_{\rm ref}\in\CN^N$. Define
$\mathbf{B}_{\rm ref}=\operatorname{diag}(\mathbf{b}_{\rm ref})$ and
$\boldsymbol{\Phi}_{\rm ref}
  =\mathbf{F}_N^H\mathbf{B}_{\rm ref}\mathbf{F}_N$. For a target response
at the delay-Doppler origin and a test cell $(k,q,\theta)$, the
single-symbol grid-based conditional extended-target AF is
\begin{equation}
  \begin{aligned}
    \mathcal{A}_{3D}^{\rm e}
    (k,q;\theta,\vartheta\mid\mathbf{b}_{\rm ref},\mathbf{b}_t)
     & :=
    \left\langle
    \mathbf{H}_s(\theta)\mathbf{X}\mathbf{J}_{N,k}^T
    \boldsymbol{\Phi}_{\rm ref}^T\mathbf{D}_{N,q}^T,\right.\\[-1pt]
     & \qquad\left.
    \mathbf{H}_s(\vartheta)\mathbf{X}\boldsymbol{\Phi}_t^T
    \right\rangle_F.
  \end{aligned}
  \label{eq:extended_target_af}
\end{equation}
The cyclic-delay and target-filtering operators are circulant and commute; the Doppler
operator retains its position in the matched-filter product. Diagonalizing the circulant
operators and using \eqref{eq:x_theta} gives
\begin{equation}
  \begin{aligned}
     & \mathcal{A}_{3D}^{\rm e}
    (k,q;\theta,\vartheta
    \mid\mathbf{b}_{\rm ref},\mathbf{b}_t) \\[-1pt]
     & \quad=
    \mathbf{s}^H
    \mathbf{M}_{k,q}^{\rm e}
    (\theta,\vartheta
    \mid\mathbf{b}_{\rm ref},\mathbf{b}_t)
    \mathbf{s},
  \end{aligned}
  \label{eq:extended_target_quadratic}
\end{equation}
where
\begin{equation}
  \begin{aligned}
    \mathbf{M}_{k,q}^{\rm e}
    (\theta,\vartheta\mid\mathbf{b}_{\rm ref},\mathbf{b}_t)
     & :=
    \rho_r(\theta,\vartheta)\mathbf{G}_T^H(\theta)
    \mathbf{B}_{\rm ref}^H \\[-1pt]
     & \qquad\times\boldsymbol{\Psi}_{k,q}\mathbf{B}_t
    \mathbf{G}_T(\vartheta).
  \end{aligned}
  \label{eq:extended_target_kernel}
\end{equation}
The factors $\mathbf{B}_{\rm ref}^H$ and $\mathbf{B}_t$ explicitly represent receiver
template mismatch and target frequency selectivity. Conditional on these deterministic
responses, Theorem~\ref{thm:expected_squared_af} and
Corollary~\ref{cor:eisl_esl} apply after replacing $\mathbf{M}_{k,q}$ with
$\mathbf{M}_{k,q}^{\rm e}$. For a random target response, the resulting conditional
metrics require an outer expectation over $\mathbf{b}_t$.

\end{document}